\definecolor{lgreen} {RGB}{180,210,100}
\definecolor{dblue}  {RGB}{20,66,129}
\definecolor{ddblue} {RGB}{11,36,69}
\definecolor{lred}   {RGB}{220,0,0}
\definecolor{nred}   {RGB}{224,0,0}
\definecolor{norange}{RGB}{230,120,20}
\definecolor{nyellow}{RGB}{255,221,0}
\definecolor{ngreen} {RGB}{98,158,31}
\definecolor{dgreen} {RGB}{78,138,21}
\definecolor{nblue}  {RGB}{28,130,185}
\definecolor{jblue}  {RGB}{20,50,100}
\definecolor{GreenYellow}       {RGB}{217, 229, 6} 	    
\definecolor{Yellow}            {RGB}{254, 223, 0} 	    
\definecolor{Goldenrod}         {RGB}{249, 214, 22} 	
\definecolor{Dandelion}         {RGB}{253, 200, 47} 	
\definecolor{Apricot}           {RGB}{255, 170, 123} 	
\definecolor{Peach}             {RGB}{255, 127, 69} 	
\definecolor{Melon}             {RGB}{255, 129, 141} 	
\definecolor{YellowOrange}      {RGB}{240, 171, 0} 	    
\definecolor{Orange}            {RGB}{255, 88, 0} 	    
\definecolor{BurntOrange}       {RGB}{199, 98, 43} 	    
\definecolor{Bittersweet}       {RGB}{189, 79, 25} 	    
\definecolor{RedOrange}         {RGB}{222, 56, 49} 	    
\definecolor{Mahogany}          {RGB}{152, 50, 34} 	    
\definecolor{Maroon}            {RGB}{152, 30, 50} 	    
\definecolor{BrickRed}          {RGB}{170, 39, 47} 	    
\definecolor{Red}               {RGB}{255, 0, 0}        
\definecolor{BrilliantRed}      {RGB}{237, 41, 57} 	    
\definecolor{OrangeRed}         {RGB}{231, 58, 0} 	    
\definecolor{RubineRed}         {RGB}{202, 0, 93}       
\definecolor{WildStrawberry}    {RGB}{203, 0, 68} 	    
\definecolor{Salmon}            {RGB}{250, 147, 171} 	
\definecolor{CarnationPink}     {RGB}{226, 110, 178} 	
\definecolor{Magenta}           {RGB}{255, 0, 144} 	    
\definecolor{VioletRed}         {RGB}{215, 31, 133} 	
\definecolor{Rhodamine}         {RGB}{224, 17, 157} 	
\definecolor{Mulberry}          {RGB}{163, 26, 126} 	
\definecolor{RedViolet}         {RGB}{161, 0, 107} 	    
\definecolor{Fuchsia}           {RGB}{155, 24, 137} 	
\definecolor{Lavender}          {RGB}{240, 146, 205} 	
\definecolor{Thistle}           {RGB}{222, 129, 211} 	
\definecolor{Orchid}            {RGB}{201, 102, 205} 	
\definecolor{DarkOrchid}        {RGB}{153, 50, 204} 	
\definecolor{Purple}            {RGB}{182, 52, 187} 	
\definecolor{Plum}              {RGB}{79, 50, 76} 	    
\definecolor{Violet}            {RGB}{75, 8, 161} 	    
\definecolor{RoyalPurple}       {RGB}{82, 35, 152} 	    
\definecolor{BlueViolet}        {RGB}{33, 7, 106} 	    
\definecolor{Periwinkle}        {RGB}{136, 132, 213} 	
\definecolor{CadetBlue}	  	    {RGB}{95, 158, 160} 	
\definecolor{CornflowerBlue}  	{RGB}{99, 177, 229} 	
\definecolor{MidnightBlue}	  	{RGB}{0, 65, 101} 	    
\definecolor{NavyBlue}          {RGB}{0, 70, 173}       
\definecolor{RoyalBlue}         {RGB}{0, 35, 102}       
\definecolor{Blue}              {RGB}{0, 24, 168}       
\definecolor{Cerulean}          {RGB}{0, 122, 201}      
\definecolor{Cyan}              {RGB}{0, 159, 218}      
\definecolor{ProcessBlue}       {RGB}{0, 136, 206}      
\definecolor{SkyBlue}           {RGB}{91, 198, 232}     
\definecolor{Turquoise}         {RGB}{0, 255, 239} 	    
\definecolor{TealBlue}          {RGB}{0, 124, 146} 	    
\definecolor{Aquamarine}        {RGB}{0, 148, 179} 	    
\definecolor{BlueGreen}         {RGB}{0, 154, 166} 	    
\definecolor{Emerald}           {RGB}{80, 200, 120} 	
\definecolor{JungleGreen}       {RGB}{0, 115, 99} 	    
\definecolor{SeaGreen}          {RGB}{0, 176, 146} 	    
\definecolor{Green}             {RGB}{0, 173, 131} 	    
\definecolor{ForestGreen}       {RGB}{0, 105, 60} 	    
\definecolor{PineGreen}         {RGB}{0, 98, 101} 	    
\definecolor{LimeGreen}         {RGB}{50, 205, 50} 	    
\definecolor{YellowGreen}       {RGB}{146, 212, 0} 	    
\definecolor{SpringGreen}       {RGB}{201, 221, 3} 	    
\definecolor{OliveGreen}        {RGB}{135, 136, 0} 	    
\definecolor{RawSienna}         {RGB}{149, 82, 20} 	    
\definecolor{Sepia}             {RGB}{98, 60, 27} 	    
\definecolor{Brown}             {RGB}{134, 67, 30}      
\definecolor{Tan}               {RGB}{210, 180, 140}	
\definecolor{Gray}              {RGB}{139, 141, 142} 	
\definecolor{Black}		  	    {RGB}{30, 30, 30}       
\definecolor{White}		  	    {RGB}{255, 255, 255}    
\newcommand{\hlcolor}{Yellow!35}
\newcommand{\statecolor}{Black}
\newcommand{\hlcolorTwo}{BrickRed!35}
\newenvironment{btHighlight}[1][]
{\begingroup\tikzset{bt@Highlight@par/.style={#1}}\begin{lrbox}{\@tempboxa}}
{\end{lrbox}\bt@HL@box[bt@Highlight@par]{\@tempboxa}\endgroup}
\newcommand\btHL[1][]{%
  \begin{btHighlight}[#1]\bgroup\aftergroup\bt@HL@endenv%
}
\def\bt@HL@endenv{%
  \end{btHighlight}%
  \egroup
}
\newcommand{\bt@HL@box}[2][]{%
  \tikz[#1]{%
    \pgfpathrectangle{\pgfpoint{1pt}{0pt}}{\pgfpoint{\wd #2}{\ht #2}}%
    \pgfusepath{use as bounding box}%
    \node[anchor=base west, fill=\hlcolor,outer sep=0pt,inner xsep=1pt, inner ysep=0pt, rounded corners=2pt, minimum height=\ht\strutbox+2pt,#1]{\raisebox{1pt}{\strut}\strut\usebox{#2}};
  }%
}
\newenvironment{btHighlightTwo}[1][]
{\begingroup\tikzset{bt@HighlightTwo@par/.style={#1}}\begin{lrbox}{\@tempboxa}}
{\end{lrbox}\bt@HLTwo@box[bt@HighlightTwo@par]{\@tempboxa}\endgroup}
\newcommand\btHLTwo[1][]{%
  \begin{btHighlightTwo}[#1]\bgroup\aftergroup\bt@HLTwo@endenv%
}
\def\bt@HLTwo@endenv{%
  \end{btHighlightTwo}%
  \egroup
}
\newcommand{\bt@HLTwo@box}[2][]{%
  \tikz[#1]{%
    \pgfpathrectangle{\pgfpoint{1pt}{0pt}}{\pgfpoint{\wd #2}{\ht #2}}%
    \pgfusepath{use as bounding box}%
    \node[anchor=base west, fill=\hlcolorTwo,outer sep=0pt,inner xsep=1pt, inner ysep=0pt, rounded corners=2pt, minimum height=\ht\strutbox+2pt,#1]{\raisebox{1pt}{\strut}\strut\usebox{#2}};
  }%
}
\scriptsize\color{Gray},
\definecolor{code_indent}{HTML}{CCCCCC}
\newenvironment{figureAsListingWide}
    {
    \addtocounter{figure}{-1}
    \refstepcounter{lstlisting}
     
    \begin{figure*}[!htbp]
        
        \centering
    }
    { 
        \end{figure*} 
    }
\renewcommand{\paragraph}[1]{\vspace{0.3em}\noindent\textbf{#1}}
\newtheorem{theorem}{Theorem}
\newtheorem{lemma}{Lemma}
\title{Fast and Scalable Channels in Kotlin Coroutines}
\author{
  Nikita Koval \\
  JetBrains\\
  \texttt{nikita.koval@jetbrains.com}
  \And
  Dan Alistarh\\
  IST Austria\\
  \texttt{dan.alistarh@ist.ac.at} \\
    \And
  Roman Elizarov \\
  JetBrains\\
  \texttt{roman.elizarov@jetbrains.com}
}
\begin{document}

\maketitle

\begin{abstract}
Asynchronous programming has gained significant popularity over the last decade: support for this programming pattern is available in many popular languages via libraries and native language implementations, typically in the form of coroutines or the \texttt{async/await} construct. 
Instead of programming via shared memory, this concept assumes implicit synchronization through message passing. 
The key data structure enabling such communication is the \emph{rendezvous channel}. 
Roughly, a rendezvous channel is a blocking queue of size zero, so both \texttt{send(e)} and \texttt{receive()} operations wait for each other, performing a rendezvous when they meet. 
To optimize the message passing pattern, channels are usually equipped with a fixed-size buffer, so \texttt{send}s do not suspend and put elements into the buffer until its capacity is exceeded. This primitive is known as a \emph{buffered channel}. 

This paper presents a fast and scalable algorithm for both rendezvous and buffered channels. Similarly to modern queues, our solution is based on an infinite array with two positional counters for \texttt{send(e)} and \texttt{receive()} operations, leveraging the unconditional \texttt{Fetch-And-Add} instruction to update them. Yet, the algorithm requires non-trivial modifications of this classic pattern, in order to support the full channel semantics, such as buffering and cancellation of waiting requests. 
We compare the performance of our solution to that of the Kotlin implementation, as well as against other academic proposals, showing up to \texttt{9.8}$\times$ speedup.
To showcase its expressiveness and performance, we also integrated the proposed algorithm into the standard Kotlin Coroutines library, replacing the previous channel implementations.
\end{abstract}

\section{Introduction}\label{section:intro}
Asynchronous programming is a classic construct that allows to detach the execution of a code unit from the primary application. It is extremely useful when executing code with very high variance in its latency, such as I/O or networking. 
\emph{Coroutines}~\cite{kahn1976coroutines} are a general and efficient way of supporting asynchronous programming. They may suspend and resume code execution, thus allowing for non-preemptive multitasking. Programming with coroutines is similar to standard thread-based programming, with the key difference that coroutines are lightweight and reuse native threads. 
Thus, a programmer can create a coroutine for each small task and combine these tasks using structured concurrency. In contrast with threads, which are preemptively multi-tasked by the scheduler and require additional synchronization, coroutines are \emph{cooperatively} multi-tasked {---} they explicitly yield control of the execution flow and do not require additional synchronization. 
Many modern programming languages, such as Kotlin, Go, and C++, are flexible and efficient enough to offer native support for coroutines.  

The key synchronization primitive behind coroutines is the \emph{rendezvous channel}, also known as a \textit{synchronous queue}. 
At its core, a rendezvous channel is a blocking bounded queue of zero capacity.  
It supports \texttt{send(e)} and \texttt{receive()} requests: \texttt{send(e)} checks whether the queue contains any receivers and either removes the first one (i.e. performs a ``rendezvous'' with it) or adds itself to the queue as a waiting sender; the \texttt{receive()} operation works symmetrically. Note that these operations are blocking by design. 
The rendezvous channel serves as the main synchronization primitive and therefore is key to asynchronous programming, providing the basis for other message-passing protocols, such as communicating sequential processes (CSP)~\cite{hoare1978communicating} and actors~\cite{agha1985actors}. 

To provide better insight into the channel semantics, consider an example. The figure below shows the classic producer\nobreakdash-consumer scenario with multiple producers sending tasks to consumers through a shared rendezvous channel. 
Here, the channel is similar to a queue: tasks are added to one end and received from the other.
Let the first consumer begin by invoking \texttt{receive()} to retrieve a task: as the channel is empty, it suspends, waiting for a producer. When the producer comes, it makes a rendezvous with the suspended consumer, transferring the task directly to it. However, if one more producer comes in while no consumer is waiting on the channel, the \texttt{send(e)} invocation suspends. The next \texttt{receive()} call by a consumer makes a rendezvous with the suspended producer, taking the task to process. 

\begin{figure}[H]
    \vspace{-1em}
    \centering
    \includegraphics[width=0.5\linewidth]{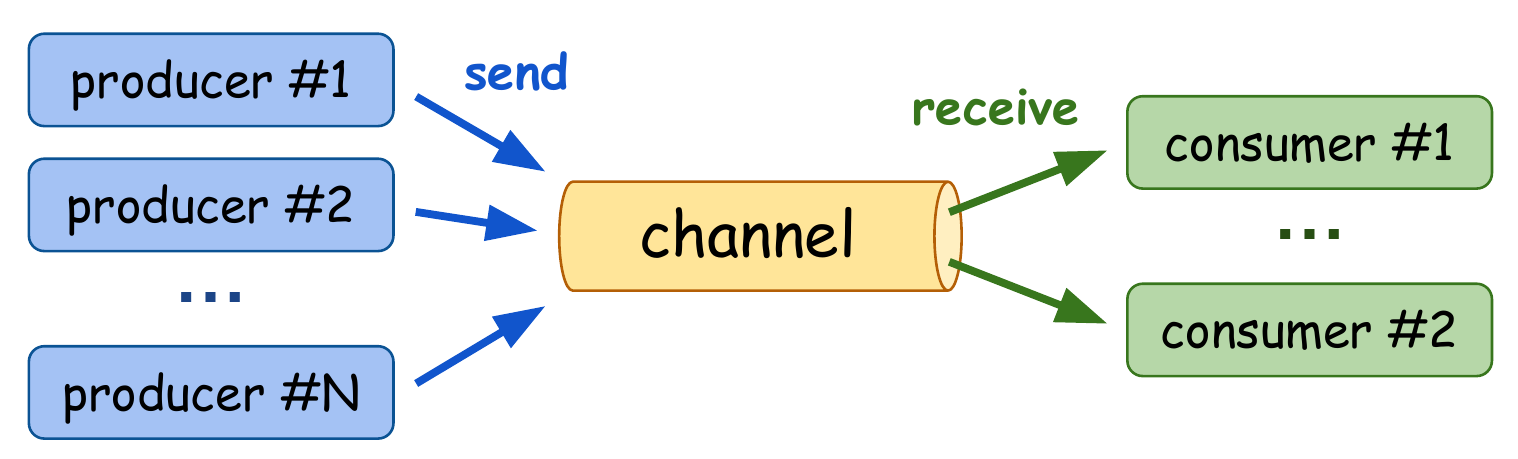}
    \vspace{-1em}
\end{figure}

When using channels for passing messages, as in the example above, programmers usually employ a more efficient \emph{buffered channel} primitive.
Roughly, it is a rendezvous channel equipped with a fixed-sized buffer: producers place elements in this buffer until its capacity is exceeded. Once the channel is full, new \texttt{send(e)} calls suspend until space in the buffer becomes available. The semantics are similar to a blocking queue of \emph{bounded capacity}.
Having limitations on the buffer size prevents a situation where producers are faster than consumers, growing the buffer indefinitely. An efficient and scalable implementation of a buffered channel is crucial for high-performance message passing.

Channels are useful, and a non-trivial amount of research has been invested in providing fast implementations. 
Java provides an efficient lock-free \texttt{SynchronousQueue} data structure with the rendezvous channel semantics~\cite{scherer2006scalable}; essentially, it stores suspended requests in a Michael-Scott queue~\cite{michael1996simple}.
Kotlin provides an implementation based on an optimized version of a lock-free doubly-linked list~\cite{kotlincoroutines} with descriptors~\cite{harris2002practical} to ensure atomicity.
Most other languages, such~as~Go and Rust, leverage a coarse-grained locking design~\cite{golang, rust}.
Recent works by Izraelevitz and Scott\footnote{The synchronous queue algorithm by Izraelevitz and Scott~\cite{SPDQ} breaks the rendezvous channel semantics in terms of the conditions under which operations suspend; we present such an example interleaving in Appendix~\ref{appendix:mpdq}.}~\cite{SPDQ} and by Koval et al.~\cite{koval2019channels} proposed novel implementations of the rendezvous channel, leveraging innovations in the design of non-blocking queues~\cite{LCRQ,YM16}. These works do not support buffering. 

There are two key challenges when building efficient channels. 
The first is to efficiently maintain a queue of suspended operations, and to implement the procedure which allows each operation to check whether there is a waiting operation of the opposite type, with which it can rendezvous; if a rendezvous is not possible, the operation should reserve a new queue cell in which it can suspend.
When extending the rendezvous semantics to \emph{buffered} channels, interruption support becomes the main issue  {---} in practice, suspended requests can be canceled, becoming unavailable for rendezvous. Supporting both interruptions and buffering semantics is highly non-trivial: all practical solutions, such as the official buffered channel implementations in Kotlin~\cite{kotlincoroutines} and Go~\cite{golang}, still use a naive \emph{coarse-grained locking} design for synchronization.

\paragraph{Our contribution.}
This paper presents a new scalable and efficient approach to implementing rendezvous and buffered channels. 
We start from the idea of building a logically-infinite array~\cite{LCRQ, SPDQ}, which that stores suspended requests, together with two atomic counters which keep track of the total number of \texttt{send(e)} and \texttt{receive()} invocations ever performed. The counters should be incremented at the beginning of each operation {---} after that, the algorithm is able to decide whether the current operation should suspend or make a rendezvous with an operation of the opposite kind. When an operation increments the counter, it also ``reserves'' the corresponding cell in the infinite array. The remaining synchronization is performed in this cell, which can only be processed by one sender and one receiver.

So far, our design follows patterns set forth by previous work. Our main contribution is in supporting \emph{buffered channel} semantics, which introduce non-trivial difficulties. Roughly, at the logical level, we need to add an extra counter which should indicate the end of the logical buffer in the array, so senders can check whether the reserved cell is in the buffer or not. 
At the same time, receivers should move this counter forward, resuming suspended senders if required.
Although this logic appears simple, the resulting synchronization patterns for buffered channels have to be implemented extremely carefully to maintain the correctness of \texttt{send(e)} and \texttt{receive()} operations, especially when suspended operations can be interrupted.


We implemented our channel design in Kotlin and integrated it to implement the communication mechanisms underlying the Kotlin Coroutines library. 
Compared to the native Kotlin solution and previous synchronous queue implementations~\cite{koval2019channels,scherer2006scalable}, our algorithm is more scalable, and outperforms prior proposals by up to \texttt{9.8}$\times$ in terms of throughput. Notably, our approach is portable, so it should extend to other languages, such as Go or Rust.

\section{Environment}\label{section:preliminaries}
\paragraph{Coroutines management.}
The algorithm we present in the paper can manipulate both threads and coroutines. As channels are typically used for asynchronous programming,~we focus on coroutines as a use-case.
Since both \texttt{send(e)} and \texttt{receive()} operations are blocking, there should be a mechanism to suspend and schedule coroutines. The API we use in the paper is presented in Listing~\ref{listing:threads_management} and can be easily adapted to most programming languages, such as Kotlin, Go, or Java. 

\begin{lstlisting}[label={listing:threads_management}, caption={Coroutines management API}, numbers=none] 
interface Coroutine {
 fun tryUnpark(): Boolean^\label{line:tryUnpark}^
 fun interrupt()^\label{line:interrupt}^
 fun park(onInterrupt: lambda () -> Unit)^\label{line:park}^
}
fun curCor(): Coroutine^\label{line:curCor}^
\end{lstlisting}

Like in operating systems, we call \texttt{park(..)} for suspending the running coroutine that can be obtained by \texttt{curCor()} call. When parking a coroutine, the corresponding native thread does not stop but schedules another coroutine, making the coroutine suspension mechanism relatively cheap.

While being parked, we assume that the coroutine can be cancelled via \texttt{interrupt()} call, becoming unable to resume. Note that coroutine interruptions, unlike thread interruptions, happen relatively often, and, therefore, should be efficient. 
After a suspended coroutine is interrupted, the \texttt{onInterrupt} action set in \texttt{park(..)} is executed (line~\ref{line:park}). If a coroutine is interrupted in an active state, the interruption takes effect with the following \texttt{park(..)} invocation.

In order to resume a coroutine, the \texttt{tryUnpark()} function should be called. This function returns \texttt{true} if the resumption is successful, and \texttt{false} if the coroutine has already been interrupted. Note, that \texttt{tryUnpark()} can be called before \texttt{park(..)} {---} in this case the following \texttt{park(..)} invocation completes without suspension.


\paragraph{Memory model and memory reclamation.}
For simplicity, we assume the sequentially-consistent memory model. Additionally to plain reads and writes, we use atomic \texttt{Compare-} \texttt{And-Swap\:(CAS)} and \texttt{Fetch-And-Add\:(FAA)} instructions. 
We also assume that the runtime environment supports garbage collection (GC).
Reclamation techniques such as hazard pointers \cite{michael2004hazard} can be used in environments without GC.


\section{The Buffered Channel Algorithm}\label{section:algo}
The high-level structure of our algorithms will be based on the common abstraction of an \emph{infinite array} data structure. 
This infinite array stores elements and coroutines, with separate counters for all the \texttt{send(e)} and \texttt{receive()} invocations that were ever performed. 
Additionally, we have a counter which indicates the end of the logical buffer in the infinite array. 
Both \texttt{send(e)} and \texttt{receive()} operations start by incrementing their counters, reserving the array cell referenced by the counter. For rendezvous channels, the operation either suspends, storing its coroutine in the cell, or makes a rendezvous with a request of the opposite type. Each cell can be processed by exactly one \texttt{send(e)} and one \texttt{receive()}. As for buffered channels, \texttt{send(s)} adds elements without suspension if the cell is in the logical buffer, while \texttt{receive()} is responsible for updating the end of the buffer. 

\setlength{\columnsep}{1.1em}
\begin{wrapfigure}[6]{r}{0.25\linewidth}
    \centering
    \vspace{-1em}
    \includegraphics[width=\linewidth]{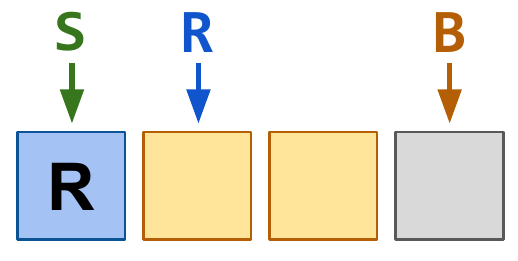}
\end{wrapfigure}

The embedded figure on the right shows a toy example of such a channel with a buffer of capacity $2$, and one \texttt{receive()} operation waiting for an element.
The following \texttt{send(e)} invocation increments the counter \texttt{S} and makes a rendezvous with the suspended receiver stored in the cell (marked with <<\texttt{R}>> in the figure). The next two \texttt{send(e)}-s proceed without suspending as the cells are in the logical buffer (buffer cells are highlighted in yellow). However, the fourth \texttt{send(e)} operation will have to suspend, as the buffer is already full.

Listing~\ref{listing:channel_structure} presents the buffered channel structure discussed above. 
The \texttt{S}, \texttt{R}, and \texttt{B} 64-bit counters store the total numbers of \texttt{send(e)} and \texttt{receive()} invocations and the end of the logical buffer. We assume that \texttt{send(e)} does not suspend when \texttt{s\:<\:B} (the buffer is not full) and makes a rendezvous with a waiting \texttt{receive()} when \texttt{s\:<\:R} (the location \texttt{A[s]} stores the receiver). Similarly, when \texttt{r\:<\:S}, \texttt{receive()} either performs a rendezvous with a waiting \texttt{send(e)} or retrieves an element from the buffer, suspending otherwise.
The infinite array \texttt{A} stores the cells with the buffered elements and suspended requests. Each cell is represented by a \texttt{Waiter} structure: the \texttt{state} field stores the suspended coroutine waiting for a rendezvous, while \texttt{elem} {---} the sending element.

\begin{lstlisting}[label={listing:channel_structure}, caption={The channel structure.}] 
class Channel<E> {
 ^\label{line:struct_sr}^var S, R, B: Long
 ^\label{line:struct_a}^var A: InfiniteArray<Waiter>
 fun send(element: E) { ... }
 fun receive(): E { ... }
}
struct Waiter<E>(state: Any?, elem: E?)
\end{lstlisting}

\subsection{The Rendezvous Channel}
We begin with an algorithm which supports only the rendezvous channel semantics; we will later extend this design to \emph{buffered} channels.
Both \texttt{send(e)} and \texttt{receive()} start by incrementing the corresponding counter via  \texttt{Fetch-And-Add}, thus, reserving the cell {---} at most one sender and one receiver can process each cell. Next, the operations read the counter of the opposite operation type to decide whether to make a rendezvous or suspend; in this part, we ignore the counter \texttt{B} and manipulate only \texttt{S} and~\texttt{R}. 
When the opposite counter ``covers'' the reserved cell (is greater than the reserved cell index), the operation makes a rendezvous. Otherwise, it installs the current coroutine to the \texttt{state} field and suspends.


\paragraph{Cell life-cycle.} 
Each \texttt{send-receive} pair working on the same cell synchronizes on the \texttt{state} field, whose state machine is presented in Figure~\ref{figure:rendezvous_cell}. To transfer the element, \texttt{send(e)} places it in the \texttt{elem} field before synchronization, so \texttt{receive()} can safely retrieve the element after that, following the \emph{safe publication} pattern.

\begin{figure}[b]
    \centering
    \includegraphics[width=0.7\linewidth]{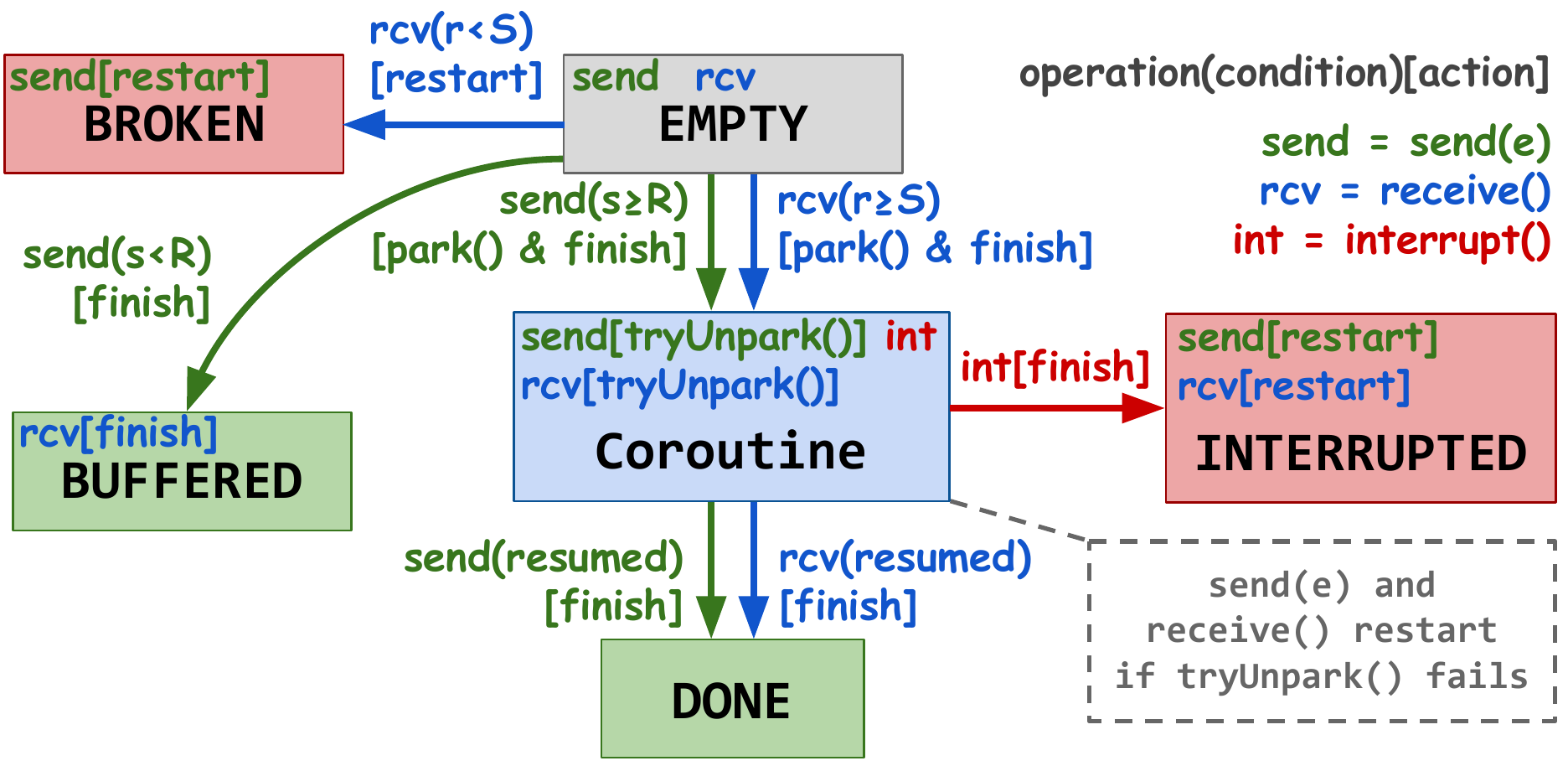}
    \captionof{figure}{
    {Cell life-cycle for the rendezvous channel.} 
    {Each state lists a set of operations that can discover it.
    If an operation should perform some action (e.g., finish or restart) when a state is observed, this action is specified in the square brackets. 
    Otherwise, the operation performs a transition to another state. When multiple transitions are possible, the condition for each is specified in circle brackets. Like the specification in states, the operation can perform the action specified in the square brackets when the transition succeeds. 
    The parameters \texttt{s} and \texttt{r} (in lower case) refer to the working cell indices for \texttt{send(e)} and \texttt{receive()}}. 
    }
    \label{figure:rendezvous_cell}
\end{figure}

Initially, each cell is in \texttt{EMPTY} state (denoted as \texttt{null} in the code). Usually, an operation to be suspended changes it to \texttt{Coroutine}. After that, an  operation of the opposite type comes to the cell, and makes a rendezvous by resuming the stored coroutine and updating the cell state to \texttt{DONE}.

\begin{figureAsListingWide}
\vspace{-0.3em}
\begin{minipage}[t]{0.49\textwidth}
\begin{lstlisting}[basicstyle=\scriptsize\selectfont\ttfamily]
fun send(element: E) = while(true) { 
 s := FAA(&S, +1) // reserve a cell ^\label{line:s:inc}^
 A[s].elem = element^\label{line:store_elem}^ 
 if updCellSend(s): return ^\label{line:s:updCell}^
}
// Returns `false` if this ^\color{Mahogany}`send(e)`^ should restart
fun updCellSend(s: Int): Bool = while(true) { ^\label{line:s:inf_loop}^
 state := A[s].state // read the current state ^\label{line:s:read_state}^
 r := R // read the receiver's counter ^\label{line:s:read_r}^
 when {
 ^\indentrule^// Empty and no receiver is coming ^\color{Mahogany}=>^ suspend
 ^\indentrule^state == null && s >= r: 
 ^\indentrule^  cor := curCor() // get the current coroutine^\label{line:s:curCor}^
 ^\indentrule^  if CAS(&A[s].state, null, cor): ^\label{line:s:store_cor}^
 ^\indentrule^  ^\indentrule^  cor.park( // wait for a ^\label{line:s:park0}^rendezvous
 ^\indentrule^  ^\indentrule^   onInterrupt = {A[s] = (INTERRUPTED, null)}) ^\label{line:s:handler}^ ^\label{line:s:park1}^
 ^\indentrule^  ^\indentrule^  return true ^\label{line:s:resumed}^
 ^\indentrule^// Waiting receiver ^\color{Mahogany}=>^ try to resume it
 ^\indentrule^state is Coroutine: 
 ^\indentrule^  if state.tryUnpark(): ^\label{line:s:tryUnpark}^
 ^\indentrule^  ^\indentrule^  A[s].state = DONE; return true ^\label{line:s:tryUnparkSuccess}^
 ^\indentrule^  else: // interrupted, clean the cell and fail
 ^\indentrule^  ^\indentrule^  A[s].elem = null; return false ^\label{line:s:tryUnparkFail}^
 ^\indentrule^// Empty but a receiver is coming ^\color{Mahogany}=>^ elimination
 ^\indentrule^state == null && s < r:
 ^\indentrule^  ^^if ^^@CAS(&A[s].state, null, BUFFERED)@: return true ^\label{line:s:markBuffered}^
 ^\indentrule^// Interrupted receiver or poisoned ^\color{Mahogany}=>^ fail
 ^\indentrule^state == INTERRUPTED || ^^#state == BROKEN#:  ^\label{line:s:intOrBroken0}^
 ^\indentrule^  A[s].elem = null // clean to avoid memory leaks 
 ^\indentrule^  return false ^\label{line:s:intOrBroken1}^
 }
}
\end{lstlisting}
\end{minipage}
\hfill
\begin{minipage}[t]{0.47\textwidth}
\begin{lstlisting}[firstnumber=33,basicstyle=\scriptsize\selectfont\ttfamily]
fun receive(): E = while(true) {
 r := FAA(&R, +1) // reserve a cell ^\label{line:rcvRend:inc_r}^
 if updCellRcv(r): 
 ^\indentrule^  e := A[r].elem; A[r].elem = null; return e
}
// Returns `false` if this ^\color{Mahogany}`receive()`^ should restart
fun updCellRcv(r: Int): Bool = while(true) {
 state := A[r].state // read the current state
 s := S // read the sender's counter
 when {
 ^\indentrule^// Empty and no sender is coming ^\color{Mahogany}=>^ suspend
 ^\indentrule^state == null && r >= s: 
 ^\indentrule^  cor := curCor() // get the current coroutine
 ^\indentrule^  if CAS(&A[r].state, null, cor):
 ^\indentrule^  ^\indentrule^  cor.park( // wait for a rendezvous
 ^\indentrule^  ^\indentrule^   onInterrupt = {A[r].state = INTERRUPTED})
 ^\indentrule^  ^\indentrule^  return true
 ^\indentrule^// Waiting sender ^\color{Mahogany}=>^ try to resume it
 ^\indentrule^state is Coroutine:
 ^\indentrule^  if state.tryUnpark():
 ^\indentrule^  ^\indentrule^  A[r].state = DONE; return true
 ^\indentrule^  else: // interrupted
 ^\indentrule^  ^\indentrule^  return false
 ^\indentrule^// Empty but a sender is coming ^\color{Mahogany}=>^ poison
 ^\indentrule^state == null && r < s:
 ^\indentrule^  if ^^#CAS(&A[r].state, null, BROKEN)#: return false
 ^\indentrule^// An elimination has happened ^\color{Mahogany}=>^ finish
 ^\indentrule^^^@state == BUFFERED@: return true
 ^\indentrule^// Interrupted sender ^\color{Mahogany}=>^ fail
 ^\indentrule^state == INTERRUPTED: return false
 }
}
\end{lstlisting}
\end{minipage}
\caption{The rendezvous channel algorithm. The \texttt{updCellSend(..)} and \texttt{updCellReceive(..)} functions update the cell state according to the diagram in Figure~\ref{figure:rendezvous_cell}.
The parts related to elimination and cell poisoning are highlighted with yellow~and~red.
}
\label{listing:send_highlevel}
\end{figureAsListingWide}

However, an operation that should resume the waiting request can come to \texttt{EMPTY} state if the operation to be suspended has not stored its coroutine yet. We use different approaches for \texttt{send(e)} and \texttt{receive()} to handle the race. The \texttt{send(e)} operation does an \emph{elimination} by changing the cell state to \texttt{BUFFERED}; it knows that a receiver is already coming, so there is no reason to wait for it. 
After that, a receiver comes and finds out that the elimination
has happened, taking the element and completing without suspension.

Unfortunately, we cannot use a similar logic for \texttt{receive()}, as it should not only resume the opposite operation but also retrieve the element. Instead, it ``poisons'' the cell by moving it to the \texttt{BROKEN} state, so both \texttt{send(e)} and \texttt{receive()} that work with the cell skip it and restart. This solution is reminiscent of the LCRQ queue~\cite{morrison2013fast}. 

Another reason to restart the operation is finding out that the coroutine of the opposite request is already interrupted; in this case, either the corresponding \texttt{tryUnpark()} fails, or the cell is in the \texttt{INTERRUPTED} state. When a coroutine is interrupted, it should change the cell state to \texttt{INTERRUPTED} to avoid memory leaks {---} this logic~can be provided via the \texttt{onInterrupt} lambda parameter to \texttt{park(..)}.

\paragraph{Algorithm for \texttt{send(e)}.}
Listing~\ref{listing:send_highlevel} presents the pseudocode for both \texttt{send(e)} and \texttt{receive()}. 
To send an element, the operation first increments its \texttt{S} counter and gets the value right before the increment via \texttt{Fetch-And-Add} (line~\ref{line:s:inc}). Next, it places the element into the reserved cell \texttt{A[s]} (line~\ref{line:store_elem}). 
Finally, it updates the cell state according to the diagram on Figure~\ref{figure:rendezvous_cell} by invoking \texttt{updCellSend(..)} function (line~\ref{line:s:updCell}).

To be consistent with the life-cycle diagram, the implementation of \texttt{updCellSend(..)} is written in a state-based manner. The logic is wrapped in an infinite loop, obtaining~the current cell state and \texttt{R} value at the beginning (lines~\ref{line:s:read_state}\nobreakdash--\ref{line:s:read_r}).

When the cell state is \texttt{EMPTY} and no receiver arrives (\texttt{s\:$\geq$\:r}), the operation decides to suspend. Thus, it obtains the current coroutine reference (line~\ref{line:s:curCor}), replaces \texttt{EMPTY} with it via
atomic \texttt{CAS} instruction (line~\ref{line:s:store_cor}), and suspends via \texttt{cor.park(..)} invocation (lines~\ref{line:s:park0}--\ref{line:s:park1}). If the \texttt{CAS} that installs the coroutine fails, the operation restarts. After the operation is resumed by receiver, it successfully completes (line~\ref{line:s:resumed}). It is also possible for \texttt{send(e)} to be interrupted while being parked. We pass a special interrupt handler to the \texttt{cor.park(..)} call that moves the cell state to \texttt{INTERRUPTED} and cleans up the element field (line~\ref{line:s:handler}).

When the cell stores a \texttt{Coroutine} instance, we know that this is a waiting \texttt{receive()} operation, with which a rendezvous should happen. Thus, \texttt{send(e)} tries to resume it, invoking \texttt{tryUnpark()} (line~\ref{line:s:tryUnpark}). On success, the state moves to \texttt{DONE}, and the operation completes (line~\ref{line:s:tryUnparkSuccess}). Otherwise, if the receiver is already interrupted, the operation cleans \texttt{A[s].elem} to avoid a memory leak and restarts (line~\ref{line:s:tryUnparkFail}).

It is also possible that the cell is already ``covered'' by an incoming \texttt{receive()}, but that the cell state is still \texttt{EMPTY}. To resolve the race, \texttt{send(e)} updates the cell to \texttt{BUFFERED}, informing the receiver that the element is already available for retrieval (line~\ref{line:s:markBuffered}); thus, an elimination happens.

Finally, when the state is discovered in either \texttt{INTERRUPTED} or \texttt{BROKEN} state, \texttt{send(e)} cleans up the element field in the cell and restarts (lines~\ref{line:s:intOrBroken0}--\ref{line:s:intOrBroken1}).


\paragraph{Algorithm for \texttt{receive()}.}
The \texttt{receive()} implementation is symmetric and updates the cell state according to the life-cycle diagram in Figure~\ref{figure:rendezvous_cell}, via the \texttt{updCellRcv(..)} function. The essential differences from \texttt{updCellSend(..)} are that it (1) it marks the cell \texttt{BROKEN} when one is \texttt{EMPTY} but a rendezvous should happen and (2) checks for \texttt{BUFFERED} marker.

While this algorithm lends itself to additional minor optimizations, we preserved its structure so that it is consistent with the buffered channel variant, presented next. 

\subsection{The Buffered Channel}\label{subsec:bc}
Essentially, a \emph{buffered channel} is a rendezvous channel with~a fixed-size buffer for storing elements. 
A buffered channel of capacity $C$ allows senders to send $C$ elements before suspending, storing them in a buffer, similarly to bounded blocking queues. When the buffer is full, \texttt{send(e)} suspends. The \texttt{receive()} operation either suspends if the buffer is empty (if $C>0$), or retrieves the first element and resumes the first waiting sender, adding its element to the buffer.

\paragraph{High-level idea.}
To support the semantics above, we maintain an additional counter \texttt{B}, initialized at the buffer capacity \texttt{C}, which indicates the end of the ``imaginary buffer'' in the array of cells. The \texttt{send(e)} operation compares the reserved index \texttt{s} with \texttt{B} and buffers the element if \texttt{s\:<\:B}, suspending otherwise. The \texttt{receive()} operation manipulates the first cell of the buffer, either retrieving an element or installing its coroutine for suspension {---} in both cases, the buffer capacity decreases. To restore it, \texttt{receive()} increments the counter \texttt{B} (obtaining the index \texttt{b} of the first cell after the buffer) and resumes the sender suspended in the cell \texttt{A[b]} if needed.

\begin{wrapfigure}[6]{r}{0.25\linewidth}
    \centering
    \vspace{-1.2em}
    \includegraphics[width=\linewidth]{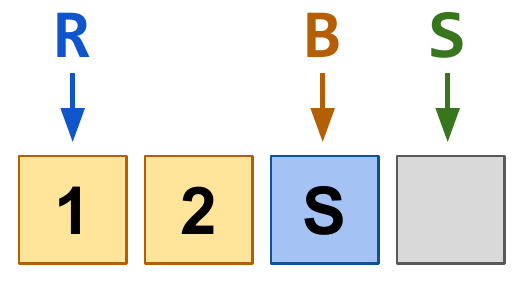}
\end{wrapfigure}

The picture on the right shows an example state for a buffered channel of capacity~$2$. Here, two elements are in the buffer, and one sender is waiting in the first cell after it (this cell is marked with <<\texttt{S}>>). The following \texttt{receive()} invocation increments \texttt{R} and retrieves the first element. After that, it expands the buffer by incrementing \texttt{B} and resuming the sender waiting in the corresponding cell \texttt{A[b]} (where \texttt{b} is the counter value right before the increment).

It is tempting to assume that the logical end of the buffer is always at position \texttt{R+C}, where \texttt{C} is the channel capacity, in which case there would be no need for the additional counter \texttt{B}. Unfortunately, this is incorrect due to interruptions. Consider a channel of capacity $1$. Two senders come: the first inserts its element to the buffer, while the second one suspends and interrupts.

\begin{wrapfigure}[6]{r}{0.25\linewidth}
    \centering
    \vspace{-1.15em}
    \includegraphics[width=\linewidth]{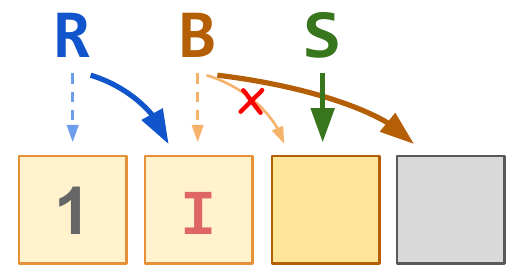}
\end{wrapfigure}

The resulting channel state is described in the right-hand-side embedded figure, where \texttt{R} and \texttt{B} reference cells by dashed arrows. After that, a \texttt{receive()} comes, incrementing \texttt{R} and retrieving the first element. If the new end of the buffer were \texttt{R+C}, the buffer (the cells between \texttt{R} and \texttt{R+C}) would cover the only cell with the interrupted sender. The following \texttt{send(e)} invocation would suspend, which is incorrect, as the channel is already empty.
Maintaining the end of the buffer explicitly solves the problem {---} if the cell \texttt{A[b]} stores an interrupted sender, the buffer expansion procedure restarts, incrementing the counter \texttt{B} again and, this way, fixing the issue. 

\vspace{-0.1em}
\paragraph{Buffer expansion.}
We extract the buffer expansion logic to a special \texttt{expandBuffer()} procedure, which is called after \texttt{receive()} successfully performs its synchronization, either retrieving the  first element, or storing its coroutine for suspension. When mentioning \texttt{receive()}, we usually refer only to its synchronization phase, without expansion.

Thus, \texttt{send(e)}, \texttt{receive()}, and \texttt{expandBuffer()} manipulate their own counters, and each cell can be processed by a single such call, apart for the case of interruption. 
To expand the buffer, we first increment the counter \texttt{B}, obtaining the index \texttt{b} right before the increment. In case the cell \texttt{A[b]} is not covered by a sender (\texttt{b\:$\geq$\:S}), it is guaranteed 
that the \texttt{send(e)} that will work with the cell later will observe that \texttt{s\:<\:B} and will buffer its element. Otherwise, if \texttt{b\:$<$\:S}, the cell \texttt{A[b]} stores a sender, or there is an incoming one.  

We now briefly discuss possible scenarios. 
Usually, the cell stores a suspended sender, and \texttt{expandBuffer()} tries to resume it, finishing on success. In case the sender resumption fails, the buffer expansion procedure restarts, as adding an already ``broken'' cell to the buffer does not help to expand it.
%
%
Also, the \texttt{send(e)} that works with the cell may still arrive, so the cell is empty. To inform the upcoming sender that it should not suspend, \texttt{expandBuffer()} moves the cell to a special \texttt{IN\_BUFFER} state and returns.
Another race occurs when \texttt{send(e)} had already observed that the cell is a part of the buffer and moved it to the \texttt{BUFFERED} state\footnote{This race is possible as \texttt{expandBuffer()} reads the \texttt{S} counter after the increment of \texttt{B}, so \texttt{send(e)} could already observe that the cell is already in the buffer; a similar race occurs between \texttt{send(e)} and \texttt{receive()}.}. 
As the cell is already in the buffer, the \texttt{expandBuffer()} procedure returns. 
Finally, it is possible for \texttt{receive()} to arrive earlier and process the cell. If there is a sender stored in the cell, \texttt{receive()} helps \texttt{expandBuffer()} to resume it. On success, both \texttt{receive()} and \texttt{expandBuffer()} finish, restarting if the sender is already interrupted. In case the receiver comes before the sender, it occupies the first buffer cell, while the \texttt{expandBuffer()} procedure can legitimately return.    

\begin{figure*}
    \centering
    \includegraphics[width=1\linewidth]{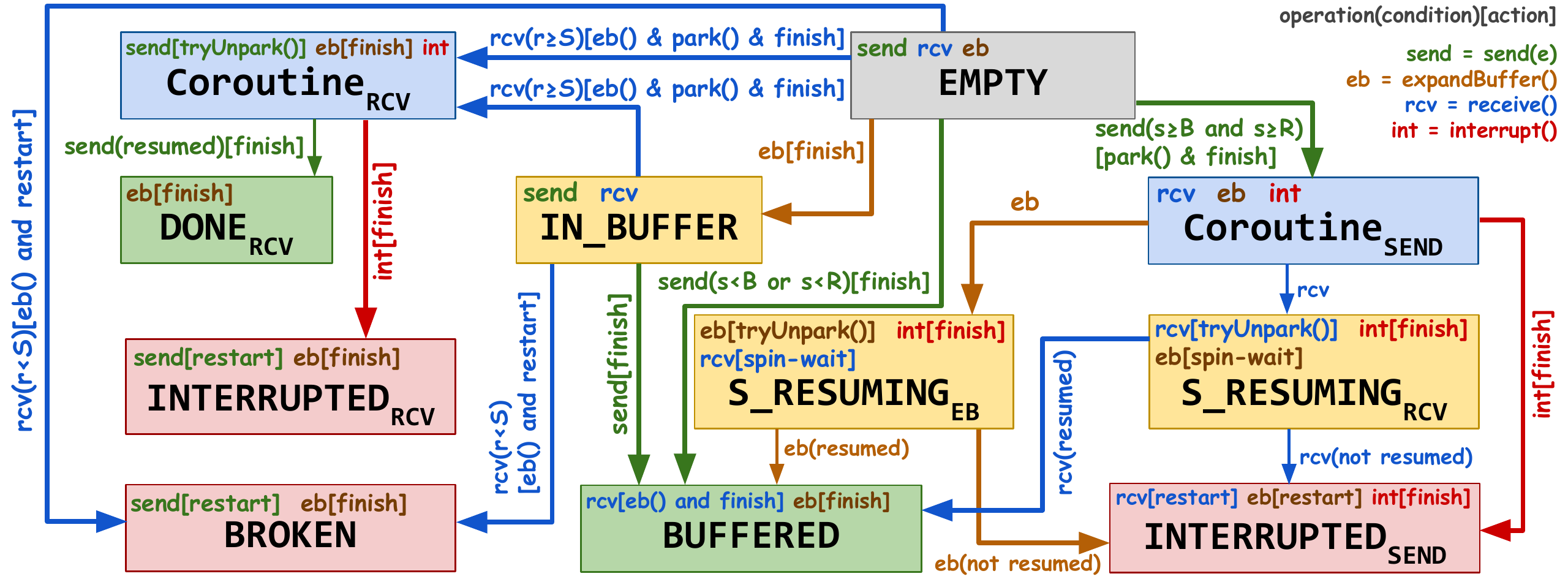}
    \vspace{-1.05em}
    \caption{
    \textbf{Cell life-cycle for the buffered channel.} The notation is the same as in the diagram for the rendezvous channel in Figure~\ref{figure:rendezvous_cell}. Here, we assume that it is possible to distinguish whether suspended sender (\texttt{Coroutine$_\mathtt{SEND}$}) or receiver (\texttt{Coroutine$_\mathtt{RCV}$}) is stored in the cell {---} this knowledge is critical for \texttt{expandBuffer()} to process the cell correctly. 
    }
    \vspace{-1em}
    \label{fig:buffered_cell_simplified}
\end{figure*}

\vspace{-0.1em}
\paragraph{Indistinguishable coroutines.}
For simplicity, we assume that it is possible to distinguish whether the coroutine stored in a cell is sender or receiver. 
While some languages, such as Go, provide this support, many
others, such as Kotlin or Java, require a more general implementation.
We discuss how to overcome this restriction in Appendix~\ref{appendix:nondistinguishable}.

\paragraph{Cell life-cycle.}
Figure~\ref{fig:buffered_cell_simplified} shows the cell life-cycle diagram for the buffered channels; the corresponding \texttt{updCellSend(..)} and \texttt{updCellRcv(..)} implementations are presented in Listing~\ref{listing:bc}. The abstract \texttt{send(e)} and \texttt{receive()} operations stay the same as for the rendezvous channels in Listing~\ref{listing:send_highlevel}.

\paragraph{Algorithm for \texttt{send(e)}.}
Initially, each cell is in the \texttt{EMPTY} state. When \texttt{send(e)} comes to an empty cell, it decides whether to buffer the element or suspend. 
In case the cell is part of the buffer (\texttt{s\:<\:B}), or a receiver is coming (\texttt{s\:<\:R}), it moves the cell state to \texttt{BUFFERED} and returns (lines~\ref{line:bc:inbuffer0}--\ref{line:bc:inbuffer1}). 
Similarly, \texttt{send(e)} buffers the element if the cell state~was moved to \texttt{IN\_BUFFER} by a concurrent \texttt{expandBuffer()} (line~\ref{line:bc:inbuffer0}).
Otherwise, the \texttt{send(e)} that works with the empty cell decides to suspend, moving its state to \texttt{Coroutine$_\mathtt{SEND}$} (lines~\ref{line:bc:suspend0}--\ref{line:bc:suspend1}).

In case the cell stores a suspended receiver, represented by the \texttt{Coroutine$_\mathtt{RCV}$} state, \texttt{send(e)} tries to make a rendezvous with it (lines~\ref{line:bc:receiver0}--\ref{line:bc:receiver1}). If the resumption succeeds, the operation finishes, restarting if the suspended receiver is already interrupted. Similarly, \texttt{send(e)} restarts if the cell state is \texttt{INTERRUPTED$_\mathtt{RCV}$} or \texttt{BROKEN}, skipping this cell (lines~\ref{line:bc:intbroken0}--\ref{line:bc:intbroken1}).

\begin{figureAsListingWide}
\vspace{-1em}
\begin{minipage}[t]{0.5\textwidth}
\begin{lstlisting}[basicstyle=\scriptsize\selectfont\ttfamily]
fun updCellSend(s: Int): Bool = while (true) { 
 state := A[s].state; r := R; @b := B@
 when {
 ^\indentrule^// Empty and either the cell is in the buffer 
 ^\indentrule^// or a receiver is coming ^\color{Mahogany}=>^ buffer
 ^\indentrule^state^\:^==^\:^null && (s^\:^<^\:^r^\:^||^\:^@s^\:^<^\:^b@) || @state^\:^==^\:^IN_BUFFER@: ^\label{line:bc:inbuffer0}^
 ^\indentrule^  if CAS(&A[s].state, state, BUFFERED):
 ^\indentrule^  ^\indentrule^  return true // successfully buffered ^\label{line:bc:inbuffer1}^
 ^\indentrule^// Empty, the cell is not in the buffer, 
 ^\indentrule^// no receiver is coming ^\color{Mahogany}=>^ suspend
 ^\indentrule^state == null && @s >= b@ && s >= r: ^\label{line:bc:suspend0}^
 ^\indentrule^  cor := curCor()
 ^\indentrule^  if CAS(&A[s].state, null, cor):
 ^\indentrule^  ^\indentrule^  cor.park({ onInterrupt(s) }); return true ^\label{line:bc:suspend1}^
 ^\indentrule^// Waiting receiver ^\color{Mahogany}=>^ try to resume it
 ^\indentrule^state is Coroutine^$_\mathtt{\textbf{RCV}}$^: ^\label{line:bc:receiver0}^
 ^\indentrule^  if state.tryUnpark():
 ^\indentrule^  ^\indentrule^  A[s].state = DONE^$_\mathtt{\textbf{RCV}}$^; return true
 ^\indentrule^  else: // interrupted, clean the cell and fail
 ^\indentrule^  ^\indentrule^  A[s].elem = null; return false ^\label{line:bc:receiver1}^
 ^\indentrule^//Interrupted receiver or poisoned ^\color{Mahogany}=>^ fail
 ^\indentrule^state == INTERRUPTED^$_\mathtt{\textbf{RCV}}$^ || state == BROKEN: ^\label{line:bc:intbroken0}^
 ^\indentrule^  A[s].elem = null // clean to avoid memory leaks
 ^\indentrule^  return false ^\label{line:bc:intbroken1}^
}
fun updCellRcv(r: Int): Bool = while (true) { 
 state := A[r].state; s := S
 when {
 ^\indentrule^// Empty and no sender is coming ^\color{Mahogany}=>^ suspend
 (state == null || @state == IN_BUFFER@) && r >= s: ^\label{line:bc:rcv:emptygood0}^
 ^\indentrule^  cor := curCor()
 ^\indentrule^  if CAS(&A[r].state, state, cor):
 ^\indentrule^  ^\indentrule^  ^^@expandBuffer()@  
 ^\indentrule^  ^\indentrule^  cor.park({ onInterrupt(r) }); return true ^\label{line:bc:rcv:emptygood1}^
 ^\indentrule^// Empty but a sender is coming ^\color{Mahogany}=> poison \& restart^
 ^\indentrule^(state == null || @state == IN_BUFFER@) && r < s: ^\label{line:bc:rcv:emptybad0}^
 ^\indentrule^  if ^^CAS(&A[r].state, state, BROKEN):
 ^\indentrule^  ^\indentrule^  ^^@expandBuffer()@; return false ^\label{line:bc:rcv:emptybad1}^
 ^\indentrule^// Buffered element ^\color{Mahogany}=>^ finish
 ^\indentrule^state == BUFFERED: @expandBuffer()@; return true ^\label{line:bc:rcv:buffered}^
 ^\indentrule^// Interrupted sender ^\color{Mahogany}=>^ fail
 ^\indentrule^state == INTERRUPTED^$_\mathtt{\textbf{SEND}}$^: return false ^\label{line:bc:rcv:intsender}^
 ^\indentrule^// Waiting sender ^\color{Mahogany}=>^ try to resume it
 ^\indentrule^state is Coroutine^$_\mathtt{\textbf{SEND}}$^: ^\label{line:bc:rcv:sender0}^
 ^\indentrule^  if @CAS(&A[r].state, state, S_RESUMING^$_{\mathtt{\textbf{RCV}}}$^)@: ^\label{line:bc:rcv:sresuming}^
 ^\indentrule^  ^\indentrule^  if state.tryUnpark(): // resumed
 ^\indentrule^  ^\indentrule^  ^\indentrule^  A[r].state = BUFFERED
 ^\indentrule^  ^\indentrule^  else: // interrupted
 ^\indentrule^  ^\indentrule^  ^\indentrule^  A[r].state = INTERRUPTED^$_\mathtt{\textbf{SEND}}$^ ^\label{line:bc:rcv:sender1}^
 ^\indentrule^// ^\color{Mahogany}expandBuffer()^ is resuming the sender ^\color{Mahogany}=>^ wait 
 ^\indentrule^#state == S_RESUMING^$_\mathtt{\textbf{EB}}$^: continue# ^\label{line:bc:rcv:spin}^
 }
}
\end{lstlisting}
\end{minipage}
\hfill
\begin{minipage}[t]{0.47\textwidth}
\begin{lstlisting}[firstnumber=54,basicstyle=\scriptsize\selectfont\ttfamily]
fun expandBuffer() = while(true) { 
 b := FAA(&B, +1) ^\label{line:bc:eb:inc}^ 
 if b >= S: return // not covered by ^\color{Mahogany}send()^, finish ^\label{line:bc:eb:check_send}^ 
 if updCellEB(b): return // update the cell state ^\label{line:bc:eb:updcelleb}^ 
}
// Returns `true` if ^\color{Mahogany}expandBuffer()^ should 
// finish, and `false` when it should restart
fun updateCellEB(b: Int): Bool = while(true) {
 state := A[b]
 when {
 ^\indentrule^// A suspended sender is stored ^\color{Mahogany}=>^ try to resume it
 ^\indentrule^state is Coroutine^$_\mathtt{SEND}$^: ^\label{line:bc:eb:sender0}^
 ^\indentrule^  if CAS(&A[s].state, state, S_RESUMING^$_\mathtt{\textbf{EB}}$^): ^\label{line:bc:eb:sresuming}^
 ^\indentrule^  ^\indentrule^  if state.tryUnpark():
 ^\indentrule^  ^\indentrule^  ^\indentrule^  A[s].state = BUFFERED; return true 
 ^\indentrule^  ^\indentrule^  else:
 ^\indentrule^  ^\indentrule^  ^\indentrule^  A[s].state = INTERRUPTED^$_\mathtt{\textbf{SEND}}$^; return false ^\label{line:bc:eb:sender1}^
 ^\indentrule^// The element is already buffered ^\color{Mahogany}=>^ finish
 ^\indentrule^state == BUFFERED: return true ^\label{line:bc:eb:buffered}^
 ^\indentrule^// The sender was interrupted ^\color{Mahogany}=>^ restart
 ^\indentrule^state == INTERRUPTED^$_\mathtt{\textbf{SEND}}$^: return false ^\label{line:bc:eb:interrupted}^
 ^\indentrule^// The cell is empty and ^\color{Mahogany}send()^ is coming ^\color{Mahogany}=>^
 ^\indentrule^// mark the cell as "in the buffer" and finish
 ^\indentrule^state == null: ^\label{line:bc:eb:empty0}^
 ^\indentrule^  if CAS(&A[b].state, null, IN_BUFFER): 
 ^\indentrule^  ^\indentrule^  return true ^\label{line:bc:eb:empty1}^
 ^\indentrule^// A receiver was stored in the cell ^\color{Mahogany}=>^ finish
 ^\indentrule^state is Coroutine^$_\mathtt{RCV}$^ || state == INTERRUPTED^$_\mathtt{\textbf{RCV}}$^: ^\label{line:bc:eb:rcv0}^
 ^\indentrule^  return true ^\label{line:bc:eb:rcv1}^
 ^\indentrule^// Poisoned cell ^\color{Mahogany}=> finish, receive() is in charge^
 ^\indentrule^state == BROKEN: return true ^\label{line:bc:eb:broken}^
 ^\indentrule^// A receiver is resuming the sender ^\color{Mahogany}=>^ wait
 ^\indentrule^#state == S_RESUMING^$_\mathtt{\textbf{RCV}}$^: continue# ^\label{line:bc:eb:spin}^
 }
}
// Cleans the cell on interruption
fun onInterrupt(i: Int) { ^\label{line:bc:onint0}^
 // Clean the element field
 A[i].elem = null
 // Update the state
 state := A[i].state
 when {
 ^\indentrule^// Suspended sender
 ^\indentrule^state is Coroutine^$_\mathtt{\textbf{SEND}}$^:
 ^\indentrule^  A[i].state = INTERRUPTED^$_\mathtt{\textbf{SEND}}$^
 ^\indentrule^// Suspended receiver
 ^\indentrule^state is Coroutine^$_\mathtt{\textbf{RCV}}$^:
 ^\indentrule^  A[i].state = INTERRUPTED^$_\mathtt{\textbf{RCV}}$^
 ^\indentrule^// Do nothing in any other case
 } ^\label{line:bc:onint1}^
}
\end{lstlisting}
\end{minipage}
\caption{
\textbf{The buffered channel algorithm.} Here we assume that \texttt{B} is initialized with the channel capacity. The abstract \texttt{send(e)} and \texttt{receive()} implementations stay the same as for the rendezvous channels in Listing~\ref{listing:send_highlevel} and, thus, are omitted. The key changes in them are highlighted with yellow. Also, the code related to the blocking behavior is highlighted with red. The \texttt{updCellSend(..)}, \texttt{updCellRcv(..)}, and \texttt{updCellEB(..)} functions as well as the interruption handler perform cell state transitions according to the diagram in Figure~\ref{fig:buffered_cell_simplified}. This code works under the assumption that it is possible to distinguish whether the stored coroutine is sender or~receiver; we propose a way to overcome this restriction in Appendix~\ref{appendix:nondistinguishable}.
}
\label{listing:bc}
\end{figureAsListingWide}

\paragraph{Algorithm for \texttt{receive()}.}
When \texttt{receive()} arrives at a cell in either \texttt{EMPTY} or \texttt{IN\_BUFFER} state, it decides to suspend, updating the cell state to \texttt{Coroutine$_\mathtt{RCV}$} and invoking the \texttt{expandBuffer()} procedure before the suspension (lines~\ref{line:bc:rcv:emptygood0}--\ref{line:bc:rcv:emptygood1}). Similar to rendezvous channels, the cell may be already covered by a concurrent \texttt{send(e)} (\texttt{r\:<\:S}), so suspension is forbidden. In this case, \texttt{receive()} moves the cell to the \texttt{BROKEN} state and restarts (lines~\ref{line:bc:rcv:emptybad0}\nobreakdash--\ref{line:bc:rcv:emptybad1}). As \texttt{receive()} poisons a buffer cell, it is crucial to call \texttt{expandBuffer()} after that (line~\ref{line:bc:rcv:emptybad1}). 

In case the cell stores a buffered element (\texttt{BUFFERED} state), \texttt{receive()} retrieves it and expands the buffer (line~\ref{line:bc:rcv:buffered}). 
If the cell stored a sender, and it was interrupted (\texttt{INTERRUPTED$_\mathtt{SEND}$} state), the operation restarts (line~\ref{line:bc:rcv:intsender}).
It is also possible for the receiver to find the cell in the \texttt{Coroutine$_\mathtt{SEND}$} state when coming earlier than \texttt{expandBuffer()} (lines~\ref{line:bc:rcv:sender0}--\ref{line:bc:rcv:sender1}). To make progress, \texttt{receive()} helps the late \texttt{expandBuffer()} and~tries to resume the sender. 
To synchronize with an \texttt{expandBuffer()}, it first moves the cell to an intermediate \texttt{S\_RESUMING$_\mathtt{RCV}$} state (line~\ref{line:bc:rcv:sresuming}), updating it to either \texttt{BUFFERED} (on success) or \texttt{INTERRUPTED$_\mathtt{SEND}$} (on failure); the further cell processing depends on this new state. 
Similarly, the \texttt{expandBuffer()} moves the cell to an intermediate state \texttt{S\_RESUMING$_\mathtt{EB}$}  when resuming a sender~{---} \texttt{receive()} waits in a spin-loop until the~cell state changes to either \texttt{BUFFERED} or \texttt{INTERRUPTED$_\mathtt{SEND}$} (line~\ref{line:bc:rcv:spin}). 


\paragraph{Algorithm for \texttt{expandBuffer()}.}
The procedure begins by incrementing its \texttt{B} counter (line~\ref{line:bc:eb:inc}). When no sender is waiting on the channel, the cell \texttt{A[b]} is usually not covered by \texttt{send(e)} (\texttt{b\:$\geq$\:S}), so no further actions are required and the buffer expansion finishes (line~\ref{line:bc:eb:check_send}). Otherwise, either the cell stores a sender or there is an upcoming one: the algorithm updates the cell according to the diagram in Figure~\ref{fig:buffered_cell_simplified} (line~\ref{line:bc:eb:updcelleb}).

In case the cell stores a suspended sender (\texttt{Coroutine$_\mathtt{SEND}$} state), \texttt{expandBuffer()} tries to resume it, moving the element to the buffer (lines~\ref{line:bc:eb:sender0}--\ref{line:bc:eb:sender1}).
To synchronize with a concurrent \texttt{receive()}, it first moves the cell to an intermediate \texttt{S\_RESUMING$_\mathtt{EB}$} state (line~\ref{line:bc:eb:sresuming}). On success, \texttt{expandBuffer()} updates the cell state to \texttt{BUFFERED} and finishes, moving the cell to \texttt{INTERRUPTED$_\mathtt{SEND}$} and restarting on failure. Similarly, when the cell state is already \texttt{BUFFERED}, the buffer expansion procedure finishes (line~\ref{line:bc:eb:buffered}), while observing the \texttt{INTERRUPTED$_\mathtt{SEND}$} state causes its restarting (line~\ref{line:bc:eb:interrupted}).

It is also possible for the cell to be \texttt{EMPTY} when the sender is still coming. 
To inform that the cell is a part of the buffer, \texttt{expandBuffer()} updates its state to \texttt{IN\_BUFFER} (lines~\ref{line:bc:eb:empty0}\nobreakdash--\ref{line:bc:eb:empty1}); \texttt{receive()} processes the \texttt{IN\_BUFFER} state as it were~\texttt{EMPTY}.
 
Another rare situation occurs when there was a receiver suspended in the cell. Independently on whether it is still suspended (\texttt{Coroutine$_\mathtt{RCV}$}), already resumed (\texttt{Done$_\mathtt{RCV}$}), or interrupted (\texttt{INTERRUPTED$_\mathtt{RCV}$}), the buffer expansion finishes (lines~\ref{line:bc:eb:rcv0}--\ref{line:bc:eb:rcv1}).
Similarly, the procedure finishes when the cell is in the \texttt{BROKEN} state (line~\ref{line:bc:eb:broken}); the receiver that poisoned the cell is in charge of adding an additional one to the buffer.

At last, when \texttt{receive()} resumes a sender, it first moves the cell state to \texttt{S\_RESUMING$_\mathtt{RCV}$} {---} \texttt{expandBuffer()} waits in a spin-loop until the cell state changes to either \texttt{BUFFERED} or \texttt{INTERRUPTED$_\mathtt{SEND}$} (line~\ref{line:bc:eb:spin}).

\paragraph{Interruptions.} When \texttt{send(e)} or \texttt{receive()} suspended~in the $i$-th cell is interrupted, the \texttt{onInterrupt(i)} function~is called (lines~\ref{line:bc:onint0}--\ref{line:bc:onint1}); it replaces the stored coroutine with \texttt{INTERRUPTED$_\mathtt{SEND}$} or \texttt{INTERRUPTED$_\mathtt{RCV}$} to avoid memory~leaks.

\subsection{Infinite Array Implementation} \label{section:infarr}
The last open question is how to emulate the infinite array. Notice that all cells are processed in sequential order, requiring access only to those between the counters. The high-level idea is to maintain a linked list of segments, each has an array of $K$ cells. All segments are marked with a unique \texttt{id} and follow each other;
Figure~\ref{figure:cell_storage} shows the high-level structure.
To access the $i$-th cell in the infinite array, we have to find the segment with the id \texttt{i\:/\:K} (adding it to the linked list if needed) and go to cell \texttt{i\:\%\:K} in it. 
Importantly, segments full of interrupted cells can be physically removed from the list in constant time {---} the algorithm maintains an additional \texttt{prev} pointer for that. Due to space constraints, we present the implementation details in Appendix~\ref{appendix:infarr}. 

\begin{figure}[H]
    \centering
    \includegraphics[width=0.5\linewidth]{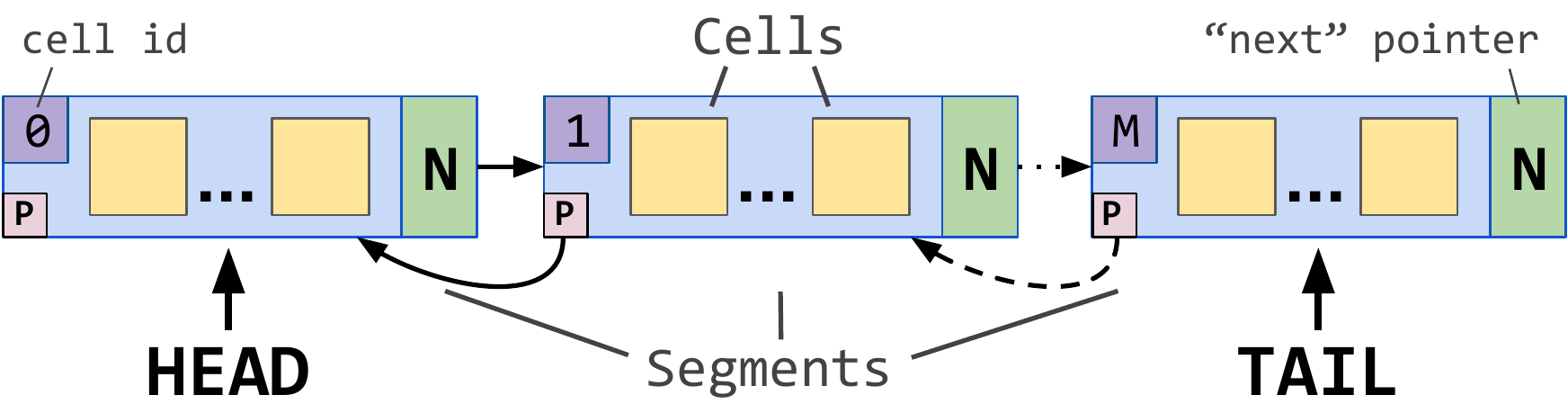}
    \captionof{figure}{An example of the infinite array structure.}
    \label{figure:cell_storage}
\end{figure}





\section{Correctness and Progress Guarantees}
Channels are blocking data structures by design (e.g., senders wait for receivers and vice-versa in rendezvous channels). To discuss correctness and progress guarantees, we apply the dual data structures formalism~\cite{scherer2004nonblocking}, initially introduced for synchronous queues.
Briefly, the idea is to split each operation into two phases: (1) the \emph{registration} phase, which either performs the operation or registers it as a waiter, and (2) the \emph{follow-up} phase, which happens after the operation is resumed; each phase has its own linearization point. This way, we split the blocking \texttt{send(e)} and \texttt{receive()} into two phases at the point of suspension.
As the \emph{follow\nobreakdash-up} parts are trivial (they simply complete the operations), we do not consider them in our discussion, focusing on the \emph{registration} phases that, essentially, perform all the synchronization.

\subsection{Correctness}\label{subsec:correctness}
Several works~\cite{LCRQ,SPDQ,YM16} use the idea of building a queue-like data structure on top of an infinite array with positioning counters for enqueuing and dequeuing. As these works already provide linearizability proofs, we reuse this knowledge, omitting the corresponding part. 
Specifically, it is known that both \texttt{enqueue(e)} and \texttt{dequeue()} linearize at the points where their counter increments, ignoring possible unsuccessful attempts to perform the operation. 
Similarly, \texttt{send(e)} linearizes on the \texttt{S} counter increment (line~\ref{line:s:inc} in Listing~\ref{listing:send_highlevel}) 
if the following \texttt{updCellSend(..)} invocation returns \texttt{true}, while \texttt{receive()}  linearizes on the  increment of \texttt{R} (line~\ref{line:rcvRend:inc_r}).
 To show that our rendezvous channel implementation is correct, we prove that neither \texttt{send(e)} nor \texttt{receive()} suspends when it should not {---} the rest of the algorithm works similar to a queue. After that, we discuss the buffer maintenance correctness in our buffered channel.  

\paragraph{Rendezvous channels.}
The \texttt{send(e)} operation is eligible to suspend only if \texttt{S\:$\geq$\:R} at the point the operation increments the \texttt{S} counter. As performing the increment and reading the opposite counter is non-atomic, it is enough to prove the claim below to show \texttt{send(e)} correctness. 

\begin{lemma}\label{theorem:rend}
If \texttt{send(e)} decides to suspend, it is guaranteed that \texttt{S\:$\geq$\:R} at the point where the \texttt{S} counter is incremented.
\end{lemma}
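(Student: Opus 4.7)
The plan is to argue by carefully tracing the order of events. Suppose \texttt{send(e)} reaches the branch on line~\ref{line:s:store_cor} that installs its coroutine via \texttt{CAS} and parks. I will examine three points in time: (i) the \texttt{Fetch-And-Add} on \texttt{S} at line~\ref{line:s:inc}, which transitions the counter from $s$ to $s+1$; (ii) the read \texttt{r := R} inside the final iteration of the loop in \texttt{updCellSend(..)} on line~\ref{line:s:read_r}; and (iii) the evaluation of the branch guard \texttt{state == null \&\& s >= r}. The objective is to show that the value of \texttt{R} at moment (i) is at most $s$, so that immediately at the \texttt{S}-increment we have \texttt{S} $\geq$ \texttt{R} (in fact strictly greater once \texttt{S} has been bumped to $s+1$).

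The central observation is that \texttt{R} is modified only by the \texttt{Fetch-And-Add} on line~\ref{line:rcvRend:inc_r}, so its value over time is a non-decreasing sequence. Let $r_0$ denote the value of \texttt{R} at event (i), and let $r$ denote the value read at event (ii). Since the body of \texttt{updCellSend(..)} runs entirely in program order after the \texttt{FAA} on \texttt{S}, and the memory model is assumed sequentially consistent, event (ii) is no earlier than (i); hence $r_0 \leq r$ by monotonicity of \texttt{R}. Combined with the guard $s \geq r$ that must have held at (iii) for the suspension branch to be entered, this yields $r_0 \leq r \leq s$, which is exactly the claim.

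The only subtlety is that \texttt{updCellSend(..)} sits inside a retry loop, so several iterations may intervene between the single \texttt{FAA} on \texttt{S} and the iteration that finally commits to suspending. This does not weaken the argument: every retry reads \texttt{R} afresh on line~\ref{line:s:read_r}, and every such read happens after the \texttt{FAA} on \texttt{S}, so monotonicity of \texttt{R} still bounds the increment-time value by the value observed later. A minor point worth stating is that the actual installation of the coroutine requires a successful \texttt{CAS}, while ``deciding to suspend'' is already committed by the guard $s \geq r$; a failed \texttt{CAS} simply triggers another iteration, and the argument applies to whichever iteration ultimately succeeds. Beyond monotonicity of \texttt{R} and sequential consistency, no further machinery is needed, so I do not anticipate any significant obstacle.
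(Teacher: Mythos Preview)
Your proof is correct and follows essentially the same approach as the paper's own argument: both use the monotonicity of \texttt{R} together with the fact that the read of \texttt{R} in \texttt{updCellSend(..)} happens after the \texttt{FAA} on \texttt{S}, so the value $r_0$ (the paper calls it $r'$) at increment time is bounded by the later-read $r$, and the guard $s \geq r$ then forces $s \geq r_0$. Your additional remarks about the retry loop and the \texttt{CAS} are sound but not needed for the core argument, which the paper states more tersely.
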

\begin{proof}
When \texttt{send(e)} increments its \texttt{S} counter (line~\ref{line:s:inc} in~Listing~\ref{listing:send_highlevel}), it obtains its value right before the increment, which we denote by \texttt{s}; let the value of \texttt{R} be \texttt{r'} at this time. 
Next, \texttt{send(e)} reads the  counter \texttt{R}, obtaining value \texttt{r\:$\geq$r'} (line~\ref{line:s:read_r}),
deciding to suspend if \texttt{s\:$\geq$\:r}. Then, as \texttt{r\:$\geq$r'}, the original suspension condition \texttt{s\:$\geq$\:r'} must also be satisfied. 
\end{proof}

\noindent
The discussion for \texttt{receive()} is symmetric, so we omit~it.

\paragraph{Suspensions in buffered channels.}
As in rendezvous channels, \texttt{receive()} is eligible to suspend only if \texttt{R\:$\geq$\:S} right before the increment of \texttt{R}, while \texttt{send(e)} can suspend only if \texttt{S\:$\geq$\:B} before the increment of \texttt{S}. The discussion of correctness is similar to  rendezvous channels.

Yet, dividing \texttt{receive()} into two phases might cause extra suspensions for concurrent \texttt{send}s. As \texttt{expandBuffer()} starts after the \texttt{receive()} synchronization completes, its effect, including the \texttt{B} counter increment, may be postponed. Nevertheless, we find this relaxation practical. Most programming languages already add extra suspensions due to internal scheduling mechanisms; e.g., Go can pause coroutines at any code location. Thus, even if both \texttt{receive()} phases occur atomically, extra suspensions are still possible.

\paragraph{Buffer maintenance.}
To argue the correctness of the buffer maintenance, we prove that the buffer size is constant over time. We begin with a simplified version of the algorithm, analyze it, and then show that the real implementation in this paper can be viewed as an equivalent, optimized version.

Figure~\ref{figure:proof_scheme} presents the cell life-cycle of the simplified buffered channel algorithm. 
From the high-level, it also manipulates the counters \texttt{S}, \texttt{R}, and \texttt{B}, invoking \texttt{expandBuffer()} after the \texttt{receive()} synchronization finishes; we present the pseudocode in Appendix~\ref{appendix:proofs_code}.
For simplicity, we also assume that the buffer capacity is greater than zero, receivers never interrupt, and \texttt{expandBuffer()} always moves \texttt{EMPTY} cells to the \texttt{IN\_BUFFER} state; we revisit these relaxations~later.

 \begin{figure}[h!]
    \centering
    \includegraphics[width=0.7\linewidth]{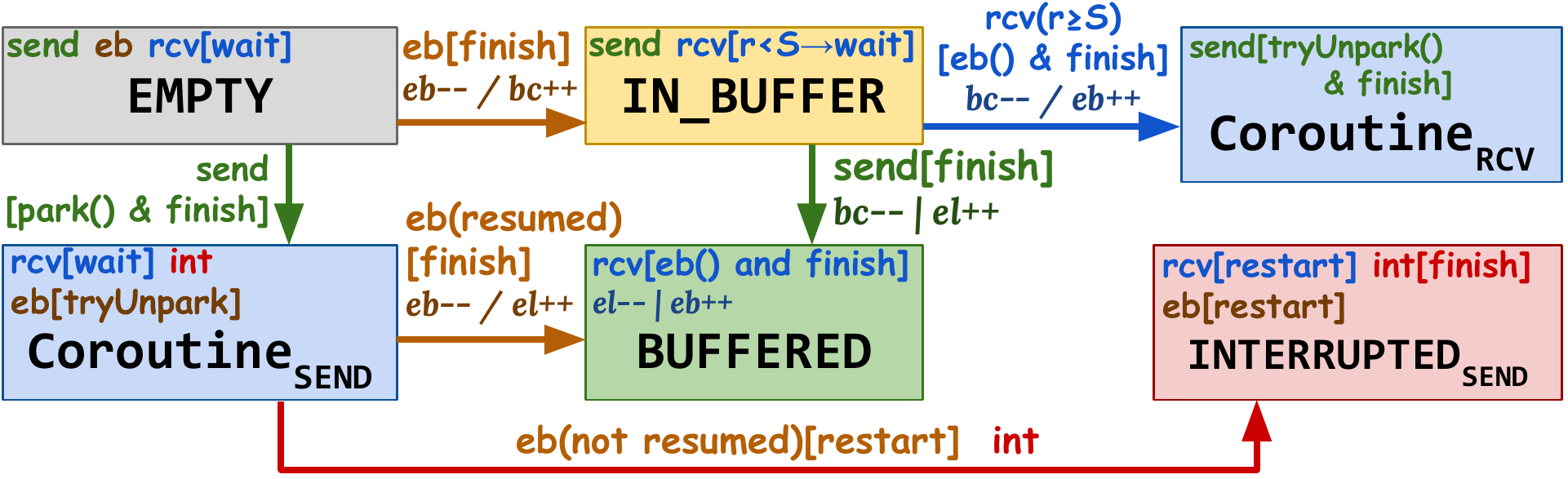}
    \captionof{figure}{Straightforward buffered channel scheme.}
    \label{figure:proof_scheme}
 \end{figure}

\noindent
To analyze the buffer maintenance, we count the number \texttt{bc} of empty buffer cells (\texttt{IN\_BUFFER} state), the number \texttt{el} of elements in the buffer (\texttt{BUFFERED} state), and the number \texttt{eb} of \texttt{expandBuffer()} calls that are yet to take effect. Given the initial buffer capacity \texttt{C}, we need to show that \texttt{bc\:+\:el\:+\:eb = C} at any algorithmic step.

\begin{theorem}\label{theorem:bc}
In the simplified algorithm, \texttt{bc\:+\:el\:+\:eb = C}.
\end{theorem}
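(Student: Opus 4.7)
The plan is to prove the invariant by induction on the atomic steps of the simplified algorithm, treating $\mathtt{bc}+\mathtt{el}+\mathtt{eb}$ as a conservation law for ``buffer tokens.'' Intuitively, the channel is endowed with exactly $C$ tokens at startup, and at every moment each token sits in exactly one of three places: an empty buffer cell (contributing to $\mathtt{bc}$), a cell holding a buffered element (contributing to $\mathtt{el}$), or an $\mathtt{expandBuffer()}$ obligation whose cell-state effect has not yet materialized (contributing to $\mathtt{eb}$). Every atomic step of the algorithm either leaves all three quantities unchanged or transfers a single token between two of them, so the sum is preserved.

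To make the base case work I would count the initial value $B=C$ as $C$ ``virtual'' pending $\mathtt{expandBuffer()}$ obligations: formally, define $\mathtt{eb}$ to equal the current value of $B$ minus the number of cell positions $i < B$ whose state already reflects a buffer slot (i.e., a cell in $\mathtt{IN\_BUFFER}$, $\mathtt{BUFFERED}$, $\mathtt{Coroutine}_{\mathtt{RCV}}$, or the corresponding processed states reachable from them). With this convention the initial state gives $\mathtt{bc}=\mathtt{el}=0$ and $\mathtt{eb}=C$, and the invariant holds at time zero.

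For the inductive step I would enumerate each atomic transition in Figure~\ref{figure:proof_scheme} that can alter any of the three quantities and check balance. The cases I expect to do most of the work are: (i) a sender performing $\mathtt{IN\_BUFFER}\to\mathtt{BUFFERED}$, where $\mathtt{bc}$ decreases and $\mathtt{el}$ increases; (ii) a sender performing $\mathtt{EMPTY}\to\mathtt{BUFFERED}$ in the $s<b$ branch, which consumes one of the pending obligations for position $s$ (so $\mathtt{eb}$ decreases) and increases $\mathtt{el}$; (iii) a receiver retrieving a $\mathtt{BUFFERED}$ cell, which decreases $\mathtt{el}$ and, by scheduling a fresh $\mathtt{expandBuffer()}$ (an $\mathtt{FAA}$ on $B$ not yet discharged), increases $\mathtt{eb}$; (iv) $\mathtt{expandBuffer()}$ performing $\mathtt{EMPTY}\to\mathtt{IN\_BUFFER}$, which moves a token from $\mathtt{eb}$ to $\mathtt{bc}$; and (v) $\mathtt{expandBuffer()}$ resuming a suspended sender via $\mathtt{Coroutine}_{\mathtt{SEND}}\to\mathtt{BUFFERED}$, which moves a token from $\mathtt{eb}$ to $\mathtt{el}$. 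The suspension transitions into $\mathtt{Coroutine}_{\mathtt{SEND}}$ and $\mathtt{Coroutine}_{\mathtt{RCV}}$ touch none of the three counters directly, and $\mathtt{expandBuffer()}$'s no-op exit when it finds the cell already in $\mathtt{BUFFERED}$ also changes nothing under the chosen definition of $\mathtt{eb}$, since the token was already transferred to $\mathtt{el}$ when the sender buffered.

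The main obstacle is getting the accounting to line up across operations, because several logical token transfers are split across two atomic steps performed by different threads (most visibly, a sender's direct buffering versus the matching $\mathtt{expandBuffer()}$ that later inspects the same cell). Defining $\mathtt{eb}$ structurally as ``$B$ minus the number of cells below $B$ that already show a buffer-like state'' rather than operationally as ``running $\mathtt{expandBuffer()}$ calls'' avoids this pitfall: the decrement of $\mathtt{eb}$ is tied to the \emph{first} atomic step that stamps a cell with a buffer-consistent state, regardless of which operation performs it, and any later visit to that cell by $\mathtt{expandBuffer()}$ is automatically a no-op for the invariant. Once $\mathtt{eb}$ is defined this way, the case analysis in the inductive step reduces to a mechanical check against Figure~\ref{figure:proof_scheme}, and the theorem follows. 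The relaxations deferred earlier (capacity zero, interruptible receivers, and $\mathtt{expandBuffer()}$ skipping the $\mathtt{EMPTY}\to\mathtt{IN\_BUFFER}$ move) would then be discharged by showing that each extra transition only re-routes an existing token between the same three categories.
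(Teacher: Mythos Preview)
Your high-level plan---induction over atomic steps, treating $\mathtt{bc}+\mathtt{el}+\mathtt{eb}$ as a conserved token count, and verifying each transition in Figure~\ref{figure:proof_scheme} preserves the sum---is exactly the paper's approach. The difference is that you are analysing a slightly different algorithm than the paper's \emph{simplified} version, and this shows up in three concrete mismatches. First, the paper's simplified channel starts with the first $C$ cells already in \texttt{IN\_BUFFER}, so the base case is $\mathtt{bc}=C$, $\mathtt{el}=\mathtt{eb}=0$; your base case $\mathtt{bc}=\mathtt{el}=0$, $\mathtt{eb}=C$ and the accompanying structural redefinition of $\mathtt{eb}$ are unnecessary. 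Second, in the simplified algorithm \texttt{send(e)} \emph{spin-waits} when the cell is \texttt{EMPTY} and $s<b$; there is no direct \texttt{EMPTY}$\to$\texttt{BUFFERED} transition by a sender, so your case~(ii) is vacuous. Third, \texttt{receive()} suspends only from \texttt{IN\_BUFFER} (never from \texttt{EMPTY}) in the simplified scheme, so the transition \texttt{IN\_BUFFER}$\to$\texttt{Coroutine}$_\mathtt{RCV}$ \emph{does} decrement $\mathtt{bc}$ (and the ensuing \texttt{expandBuffer()} call increments $\mathtt{eb}$); your claim that ``the suspension transitions \dots\ touch none of the three counters directly'' is wrong for the receiver case.

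The paper deliberately engineers the simplified algorithm so that every token transfer happens in a single atomic step by a single operation, which is precisely why it can use the plain operational definition of $\mathtt{eb}$ (number of \texttt{expandBuffer()} calls not yet taken effect) and avoid the cross-thread accounting you flag as ``the main obstacle.'' Once you adopt the paper's initial state and drop case~(ii), your structural definition of $\mathtt{eb}$ collapses to the operational one and the remaining case analysis matches the paper's almost line for line.
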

\begin{proof}
Initially, the channel is empty and the first \texttt{C} cells are in the \texttt{IN\_BUFFER} state, so \texttt{bc\:=\:C} and \texttt{el\:=\:eb\:=\:0}. Now we discuss all possible cell state transitions shown in Figure~\ref{figure:proof_scheme}.

Usually, \texttt{send(e)} buffers its element by moving the cell state from \texttt{IN\_BUFFER} to \texttt{BUFFERED}, increasing the logical counter \texttt{el} and decreasing \texttt{bc}. 
When the cell stores a waiting receiver (\texttt{Coroutine$_\mathtt{RCV}$} state), \texttt{send(e)} resumes it and finishes; the counters do not change. 
Finally, if the cell state is \texttt{EMPTY}, \texttt{send(e)} suspends; the counters do not change~again.

The \texttt{receive()} operation always works with buffer cells, waiting in a spin-loop for update if the cell state is \texttt{EMPTY} or \texttt{Coroutine$_\mathtt{SEND}$}. 
When \texttt{receive()} retrieves a buffered element (\texttt{BUFFERED} state), it invokes \texttt{expandBuffer()}, so the \texttt{el} counter decreases and \texttt{eb} increases.
In case the cell is in the \texttt{INTERRUPTED$_\mathtt{SEND}$} state, the operation restarts; the counters stay the same.
Finally, \texttt{receive()} may suspend if the channel is empty (\texttt{r\:$\geq$\:S}), moving the cell state from \texttt{IN\_BUFFER} to \texttt{Coroutine$_\mathtt{RCV}$} and invoking \texttt{expandBuffer()} next; that results in decrementing \texttt{bc} and incrementing~\texttt{eb}.

As for \texttt{expandBuffer()}, it usually moves an \texttt{EMPTY} cell to the \texttt{IN\_BUFFER} state and finishes, increasing the \texttt{bc} counter and decreasing \texttt{eb}.
If the cell stores a suspender sender, it tries to resume it. On success, the cell state moves to \texttt{BUFFERED} and the buffer expansion finishes, leading to the \texttt{el} counter increment and \texttt{eb} decrement. Otherwise, if the resumption fails, or the cell is already in the \texttt{INTERRUPTED$_\mathtt{SEND}$} state, \texttt{expandBuffer()} restarts; this, the counters stay the same.

All the considered transitions keep the balance of the~counters, so bc\:+\:el\:+\:eb always equals the buffer capacity~\texttt{C}.
\end{proof}

Now we argue in detail that the buffered channel we presented (Listing~\ref{listing:bc}) only differs from this simplified version in terms of optimizations, and therefore Theorem~\ref{theorem:bc} above can be applied to it as well. 
First, it is clear from the diagram that allowing \texttt{receive()}-s to interrupt does not affect buffer maintenance {---} the \texttt{send(e)} that processes the cell simply restarts (line~\ref{line:bc:receiver1} and lines~\ref{line:bc:intbroken0}\nobreakdash--\ref{line:bc:intbroken1}). Second, if \texttt{expandBuffer()} does not move \texttt{EMPTY} cells to \texttt{IN\_BUFFER} state when the cell is not covered by a sender (the \texttt{b\:$\geq$\:S} condition at line~\ref{line:bc:eb:check_send}), \texttt{send(e)} will observe that the cell is in the buffer by the symmetric \texttt{s\:<\:B} check (line~\ref{line:bc:inbuffer0}). 

Now we cover the additional cell state transitions. 
When \texttt{receive()} suspends in the \texttt{EMPTY} cell without waiting until it becomes \texttt{IN\_BUFFER} (lines~\ref{line:bc:rcv:emptygood0}--\ref{line:bc:rcv:emptygood1}), \texttt{expandBuffer()} observes this optimization and finishes (lines~\ref{line:bc:eb:rcv0}--\ref{line:bc:eb:rcv1}). By poisoning the cell (lines~\ref{line:bc:rcv:emptybad1}--\ref{line:bc:rcv:emptybad1}, a new cell should be added to the buffer, so \texttt{receive()} correctly invokes \texttt{expandBuffer()} to maintain the buffer capacity (line~\ref{line:bc:rcv:emptybad1}).
Finally, \texttt{receive()} may resume a waiting sender (lines~\ref{line:bc:rcv:sender0}--\ref{line:bc:rcv:sender1}), which is an optimization to help the upcoming \texttt{expandBuffer()}.


\subsection{Progress Guarantees}
In the rendezvous channel, both \texttt{send(e)} and \texttt{receive()} are non-blocking. However, a \texttt{send-receive} pair can interfere infinitely often by poisoning cells over and over, so we can only formally guarantee obstruction freedom. Notwithstanding, our experiments suggest that cell poisoning is a very infrequent event in practice.
In the buffered channel algorithm, the \texttt{receive()} operation may be blocked and wait in a spin-loop until the cell state is updated by a concurrent \texttt{expandBuffer()}, and vice versa. 
As a result, the proposed rendezvous channel algorithm is obstruction-free, while the buffered channel is blocking. However, if the cell poisoning and the blocking behavior caused by a race between \texttt{receive()} and \texttt{expandBuffer()} were not taken into account as insignificant events, both algorithms would be wait-free. In case of interruptions, the guarantee would be reduced to lock-freedom due to segment removals.

\section{Evaluation}\label{section:experiments}
We implemented the buffered channel algorithm and integrated it into the standard Kotlin Coroutines library. Importantly, we extended the presented design to full channel semantics, including \texttt{trySend(e)} and \texttt{tryReceive()} operations, as well as adding functionality to close channels: after a channel is closed, \texttt{send()}s are forbidden. 
For performance comparison, we use the standard channel implementations in Kotlin Coroutines, as well as the algorithms by Scherer et\:al.\:\cite{scherer2006scalable} and Koval et\:al.\:\cite{koval2019channels} for the rendezvous channel analysis. 
We omit lock-based solutions, as Koval et al.\:\cite{koval2019channels} already compared their implementation against such designs, showing significantly better performance, and our approach  outperforms Koval et al.\:\cite{koval2019channels}. 

\paragraph{Environment.}
The experiments were run on a server with 4 Intel Xeon Gold 5218 (Cascade Lake) 16-core sockets,
for a total of 128 hardware threads with hyper-threading. 
%
In both our algorithm and the one by Koval et al. \cite{koval2019channels}, we have chosen the segment size of $32$, based on minimal tuning.



\paragraph{Benchmark.}
To evaluate performance, we use the classic producer-consumer workload: multiple coroutines share the same channel and apply a series of \texttt{send(e)} and \texttt{receive()} operations to it.
We use the same number of producer and consumer coroutines; the total number of coroutines either equals the number of threads or is fixed to \texttt{1000}.
As for the methodology, we measure the time it takes to transfer $10^6$ elements and count the system throughput. 
Also, we simulate some work between operations by consuming 100 non-contended loop cycles on average (following a geometric distribution), which decreases the contention on the channel. As for the buffer size, we chose $64$ as a standard size constant in many applications. Experiments with different buffer sizes show similar results, so we omit them.


\paragraph{Results.}
Figure~\ref{fig:prodcons} shows the benchmark results.
We first remark upon scalability: while other algorithms degrade throughput significantly as thread count increases, our approach continues to scale, and outperforms alternatives by up to \texttt{9.8}$\times$.
Interestingly, our buffered channel algorithm shows lower throughput than the rendezvous-only version, at higher thread counts. This is caused by the higher contention {---} in case of rendezvous channels, more threads are suspended, which makes synchronization cheaper.

\begin{figure}[h!]
    \centering
    \includegraphics[width=0.8\linewidth]{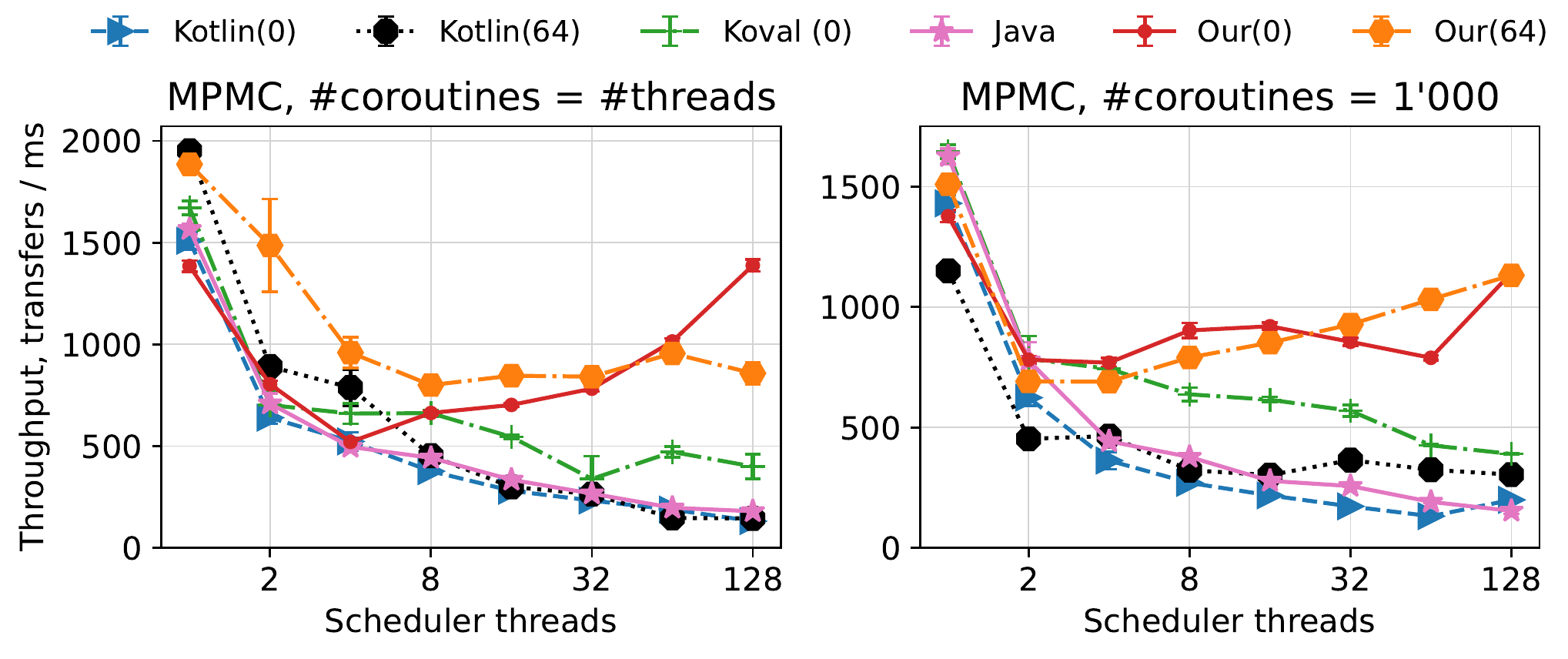}
    \captionof{figure}{
    Evaluation of channel algorithms on the producer-consumer benchmark. The buffer capacity is specified in brackets, \texttt{0} for rendezvous channels. \textbf{Higher is better.}
    }
    \label{fig:prodcons}
\end{figure}

\paragraph{Cell poisoning.} 
We collected statistics on the number of poisoned (\texttt{BROKEN}) cells. We observed that it never exceeds \texttt{10\%} of the total number of cells, even under extreme~contention. 

\paragraph{Memory usage.} We also collected the allocation pressure statistics. Under low contention, our rendezvous channel shows the same allocation rate as the algorithm by Koval et al. {---} they both leverage the linked list of segments. The Java synchronous queue and the standard Kotlin solutions show \texttt{40\%} and~\texttt{115\%} overhead, respectively. Under high contention, our solution is the best, while others show overhead from \texttt{20\%} (Java) to \texttt{90\%} (Kotlin).
As for buffered channels, the default Kotlin solution wins, as it reuses a pre-allocated array for the buffer.
In future work, we plan to reuse segments, further reducing memory overheads. 
\section{Related Work}\label{sec:related}
\paragraph{Programming languages and libraries.}
Most modern programming languages provide native support for asynchronous programming, either via actors~\cite{agha1985actors} or coroutines~\cite{kahn1976coroutines}. 
 Kotlin~\cite{kotlincoroutines}, Java~\cite{loom}, Go~\cite{golang}, and Rust~\cite{rust} use coroutines, while Akka~\cite{akka} and Erlang~\cite{erlang} use actors. All require channel implementations. Most solutions we are aware of use \emph{mutual exclusion} to maintain the waiting queue and the buffer. 
One notable exception is Kotlin, where the waiting queue~is built on top of a doubly-linked list of Sundell and Tsigas~\cite{sundell2004lock}. 
However, their buffered channel solution uses \emph{coarse-grained locking} to protect the buffer. The resulting implementation is exceptionally complex and shows significant overheads compared to our algorithm. 

\paragraph{Fair synchronous queues.}
Rendezvous channels are also known as \emph{fair synchronous queues}.
Hanson suggested the first solution based on three semaphores~\cite{hanson1996c},~which was later improved in Java 5 using a global lock with the \texttt{wait/notify} mechanism to make a rendezvous.
Java~6 implements a lock-free algorithm of  Scherer~et~al.~\cite{scherer2006scalable} based on the Michael-Scott queue~\cite{michael1996simple}. Koval et al. improved upon this by storing multiple waiters in each node and supporting interruptions more efficiently~\cite{koval2019channels}.

Izraelevitz and Scott~\cite{SPDQ} provided a general scheme for  non-blocking dual containers, presenting an efficient MPDQ synchronous queue algorithm; it leverages the LCRQ queue design~\cite{LCRQ}. 
While their approach is very interesting and efficient, some features prevent it from being applied in our setting:  both LCRQ~\cite{LCRQ} and MPDQ~\cite{SPDQ} require access to double-width atomic primitives, which are unavailable in most programming languages. 
Second, MPDQ breaks the channel semantics in terms of the conditions under which operations may suspend, leading to counter-intuitive executions. (See Appendix~\ref{appendix:mpdq} for examples.) 

Notably, all these synchronous queues solve only the rendezvous channel problem; extending them to the buffered channel semantics is non-trivial. Our work fills the gap. 

\paragraph{Unfair synchronous queues.}
Scherer et al. also proposed an \emph{unfair} stack-based synchronous queue in~\cite{scherer2006scalable}. This approach still induces a sequential bottleneck on the stack. To work around a single point of synchronization, Afek et al. introduced \textit{elimination-diffraction trees}~\cite{afek2010scalable}, in which internal nodes are \textit{balancer objects}~\cite{shavit2000combining} while leaves are synchronous queues.  
Hendler et al.~\cite{hendler2010scalable} approached the problem from a different angle by applying the \textit{flat combining} technique~\cite{hendler2010flat}.



\paragraph{Concurrent queues.}
Several techniques are known for implementing concurrent queues, and this is still an active research area, e.g.~\cite{michael1996simple, LCRQ, YM16}. 
Our work builds on some of these ideas. In particular, the general structure of the infinite array of cells is built on top of the  Michael-Scott queue~\cite{michael1996simple}, while the basic of the \texttt{send(e)/receive()} design is inspired by the LCRQ algorithm~\cite{LCRQ}. At the same time, our solution brings several new ideas, enabling not only the rendezvous semantics but also buffering and interruptions support. 


\section{Discussion}
This paper presents a fast and scalable algorithm for both rendezvous and buffered channels with interruptions support. By successfully integrating our solution into the standard Kotlin Coroutines library, we confirm that its design can express real-world requirements. Relative to previous implementations, our algorithm outruns them by up to \texttt{9.8}$\times$.
In the future, we aim to improve the channel performance even more by relaxing its strict FIFO semantics.



\clearpage
\bibliographystyle{plain}
\bibliography{references}

\appendix \clearpage
\onecolumn










\section{Non-Distinguishable Coroutines Support}\label{appendix:nondistinguishable}
In Section~\ref{section:algo}, we presented the buffered channel algorithm, which works under the assumption that it is possible to distinguish whether the coroutine stored in a cell is sender or receiver. While some languages, such as Go, provide this support, many others, such as Kotlin or Java, require a more general implementation. Now we discuss how to overcome this restriction.

The possibility to differentiate senders and receivers is critical for \texttt{expandBuffer()} to process the cell correctly: it finishes immediately if the cell stores a receiver, trying to resume the coroutine if it is a sender and finishing only if the resumption succeeds. The \texttt{send(e)} and \texttt{receive()} operations do not rely on this feature, as only the opposite type of request can suspend in the cell. The key idea to remove the restriction is to delegate the \texttt{expandBuffer()} completion when it is impossible to distinguish whether the stored coroutine is sender or receiver. 

Figure~\ref{figure:buffered_cell_extra} presents the update for the original cell life-cycle diagram in Figure~\ref{fig:buffered_cell_simplified}. When sender or receiver suspends, it moves the cell to the same \texttt{Coroutine} state.
When \texttt{expandBuffer()} processes a cell that stores a coroutine, and the cell is already covered by \texttt{receive()} (\texttt{b\:<\:R}), it is impossible to understand whether it is a sender or a receiver. To work around the race, we introduce a new \texttt{Coroutine+EB} state, which stores a coroutine with a special <<EB>> marker {---} the upcoming operations help \texttt{expandBuffer()} to finish its work.
Thus, when \texttt{b\:<\:R} and the cell stores a suspended coroutine, \texttt{expandBuffer()} moves the cell to the \texttt{Coroutine+EB} state and finishes.
In case \texttt{send(e)} observes the \texttt{Coroutine+EB} state (the cell stores a suspended receiver) it ignores the <<EB>> marker {---} indeed, the buffer expansion would finish when observing the \texttt{Coroutine$_\mathtt{RCV}$} state.
The \texttt{receive()} operation processes the \texttt{Coroutine+EB} state similarly as the \texttt{Coroutine} one, trying to resume the sender, with the only difference that it invokes the \texttt{expandBuffer()} procedure if the resumption fails; \texttt{expandBuffer()} would restart observing the \texttt{INTERRUPTED$_\mathtt{SEND}$} state.

A similar problem occurs with the interruption support. The \texttt{onInterrupt(i)} function also leverages the knowledge whether the interrupted request is sender or receiver, moving the cell state to \texttt{INTERRUPTED$_\mathtt{SEND}$} or  \texttt{INTERRUPTED$_\mathtt{RCV}$}, respectively. In fact, with a common \texttt{Coroutine} state, it can move the cell only to the common \texttt{INTERRUPTED} one. Yet, \texttt{expandBuffer()} proceeds \texttt{INTERRUPTED$_\mathtt{SEND}$} or  \texttt{INTERRUPTED$_\mathtt{RCV}$} differently, restarting in the first case and finishing in the second one. Similarly, we introduce a new \texttt{INTERRUPTED+EB} state, additionally to the \texttt{INTERRUPTED} one. At the same time, when receiver processes the cell, it moves its state to either \texttt{BUFFERED} or \texttt{INTERRUPTED$_\mathtt{SEND}$}; thus, the \texttt{expandBuffer()} procedure knows that it was sender stored in the cell.

Our implementation for Kotlin Coroutines includes this part of the algorithm and is well-documented, so we omit the corresponding pseudocode.

\begin{figure}[h!]
 \centering
 \includegraphics[width=0.6\linewidth]{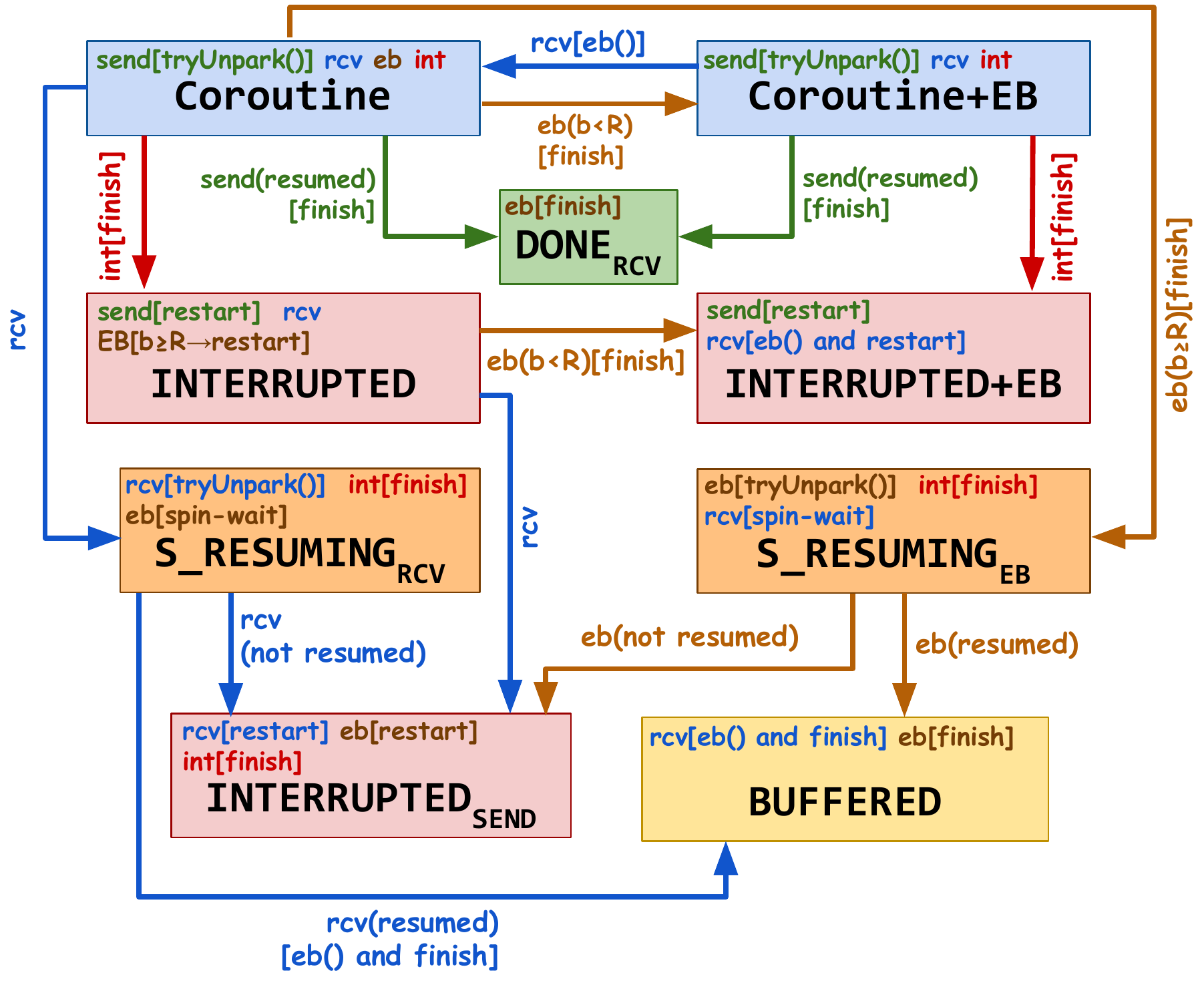}
 \captionof{figure}{Required updates to the cell life-cycle diagram for buffered channels in Figure~\ref{fig:buffered_cell_simplified} to overcome the restriction that it is possible to distinguish whether a coroutine stored in the cell is sender or receiver.}
 \label{figure:buffered_cell_extra}
\end{figure}
 


\clearpage
\section{Infinite Array: Implementation Details}\label{appendix:infarr}
In Subsection~\ref{section:infarr}, we briefly discussed how to build an infinite array with sequential access by each of the operation types. In short, we emulate it with a linked list of segments, each has an array of $K$ cells; Figure~\ref{figure:cell_storage2} below illustrates the data structure. All segments are marked with a unique \texttt{id}, maintaining an invariant that at the point of adding a new segment, the last one has id \texttt{TAIL.id\:+\:1}. To access cell $i$ in the infinite array, we have to find the segment with id \texttt{i\:/\:K} and go to the cell \texttt{i\:\%\:K} in it. 
Importantly, the segments full of interrupted cells can be physically removed from the list {---} the algorithm maintains a tricky \emph{doubly-linked} list, updating the \texttt{prev} pointers lazily and using them only for efficient removing. 

\begin{figure}[H]
    \centering
    \includegraphics[width=0.6\linewidth]{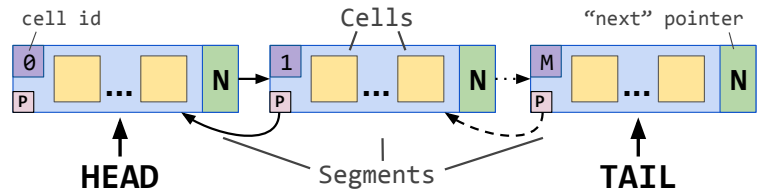}
    \captionof{figure}{An example of the infinite array structure.}
    \label{figure:cell_storage2}
\end{figure}

\paragraph{Required changes to \texttt{send(e)} and \texttt{receive()}.}
Few ingredients should be applied to \texttt{send(e)} and \texttt{receive()} to use the segments properly. 
First, to remove segments full of ``interrupted'' cells, a special \texttt{segm.onInterruptedCell()} function should be called upon interruption {---} roughly, it counts the number of interrupted coroutines stored in the segment, removing the latter if needed; we discuss the implementation later. 
Notably, as \texttt{expandBuffer()} needs to know whether the interrupted coroutine was sender or receiver, we should not call \texttt{segm.onInterruptedCell()} when the cell is already covered by \texttt{receive()} but is not processed by \texttt{expandBuffer()} yet, delegating the \texttt{segm.onInterruptedCell()} call to the upcoming \texttt{expandBuffer()}.

Second, when \texttt{send(e)} or \texttt{receive()} decides not to buffer the element or make a rendezvous, it should clean the \texttt{segm.prev} pointer of the working segment to ensure that processed segments are not reachable from the data structure anymore. 

Listing~\ref{listing:infarr_send} presents a pseudocode for \texttt{send(e)} in terms of these segments; the changes to \texttt{receive()} and \texttt{expandBuffer()} are symmetrical. 
Instead of maintaining head and tail pointers, as is usually done in concurrent queues~\cite{michael1996simple}, we maintain \texttt{SegmentS}, \texttt{SegmentR}, and \texttt{SegmentB} references to the segments last used by \texttt{send(e)}, \texttt{receive()}, and \texttt{expandBuffer()}. 

Initially, \texttt{send(e)} reads its last used segment  (line~\ref{line:sendSegm:readSegm1}) and increments \texttt{S} (line~\ref{line:sendSegm:inc_s}). Then, it locates the required segment by following the chain of \texttt{next} pointers, starting from the previously read one and updating \texttt{SegmentS}. The \texttt{findAndMoveForwardSend(..)} implementation is straightforward {---} it finds the required segment first, creating new segments if needed, and then updates \texttt{SegmentS} to the one that was found, skipping this part if it has already been updated to a later segment. When the required segment is already full of \texttt{INTERRUPTED} cells and is physically removed, the algorithm returns the first non-removed segment with \texttt{segm.id\:$\geq$\:id}. In the latter case (line~\ref{line:sendSegm:checkSegm}), the  ``reserved'' cell is already in \texttt{INTERRUPTED} state, so the caller should restart its operation. Instead of a simple restart, the algorithm can update the operation counter to the first non-interrupted cell \texttt{segm.id\:$\times$\:K} (line~\ref{line:sendSegm:skipCancelled}); thus, efficiently skipping the whole sequence of \texttt{INTERRUPTED} cells.

\begin{lstlisting}[label={listing:infarr_send},
caption={
Implementation of the \texttt{send(e)} operation, which manipulates a linked list of fixed-size segments. 
},
xleftmargin=1.7em
]
fun send(element: E) = while(true) {
 segm := SegmentS // read before the increment ^\label{line:sendSegm:readSegm1}^
 s := FAA(&S, +1)  ^\label{line:sendSegm:inc_s}^
 segm := findAndMoveForwardSend(segm, s / K) ^\label{line:sendSegm:readSegm2}^
 if segm.id != s / K: // the cell is INTERRUPTED ^\label{line:sendSegm:checkSegm}^
 ^\indentrule^ CAS(&S, s, segm.id * K) // skip INTERRUPTED cells^\label{line:sendSegm:skipCancelled}^
 ^\indentrule^ continue
 // The rest is same as before, but manipulates
 // the ^\color{Mahogany}`segm.cells[i]`^ cell instead, invokes  
 // ^\color{Mahogany}`segm.onInterruptedCell()`^  on interruption,  
 // and cleans ^\color{Mahogany}`segm.prev`^ pointer on rendezvous.
}
\end{lstlisting}

\paragraph{High-level implementation of the linked list of segments.}
Now we discuss how to manage these segments in the concurrent environment, especially when they can be removed due to interruptions to avoid memory leaks.
Listing~\ref{listing:remove_segment} presents a pseudo-code for managing the concurrent linked list of segments. 
On the left, a pseudo-code for \texttt{findAndMoveForwardSend(..)} is presented; the corresponding functions for \texttt{receive()} and \texttt{expandBuffer()} are almost identical and, therefore, are omitted. 
In brief, the operation finds the segment with the id equal or greater than the specified, and updates \texttt{SegmentS} if needed. However, this part becomes not so straightforward when coroutines can be interrupted. 
When the segment becomes full of \texttt{INTERRUPTED} cells, we want to physically remove it from the list to avoid memory leaks. This way, we can guarantee that the memory complexity depends only on the number of non-interrupted cells, even when all but one cell in each segment is \texttt{INTERRUPTED} since the segment size is constant. 


To remove a segment in $O(1)$ (under no contention), we maintain \texttt{prev} pointer to the \texttt{Segment} structure, which references to the first non-removed segment on the left, or equals \texttt{null} if all of them are removed or processed (e.g., when the removing segment is the head of the list). By maintaining the \texttt{prev} pointer, we can perform physical removal by linking the previous and the next segments to each other. However, doing so requires non-trivial tricks. 

Once a segment is removed, we should guarantee that it is no longer reachable by the \texttt{next} and \texttt{prev} references starting from \texttt{SegmentS}, \texttt{SegmentR}, and \texttt{SegmentB}. For this purpose, we split the removal procedure into two parts: \emph{logical} and \emph{physical}. 
We assume that the segment is logically removed if all the cells are in the \texttt{INTERRUPTED} state and neither \texttt{SegmentS} nor \texttt{SegmentR} or \texttt{SegmentB} references it (lines~\ref{line:r:removed0}--\ref{line:r:removed1}). At the same time, we need to guarantee that they cannot start referencing logically removed segments, making them ``alive'' again and causing memory leaks.

To solve this problem, we maintain the number of interrupted cells alongside the number of pointers that reference this segment in a single integer field (line~\ref{line:r:pc}). By storing these numbers in a single register, we are able to modify them atomically {---} to emphasize this, the corresponding code is wrapped with \texttt{atomic} block in the pseudocode. Thus, there are two ways for a segment to become logically removed. First, if neither \texttt{SegmentS} nor \texttt{SegmentR} or \texttt{SegmentB} references it, and the \texttt{interrupted} counter reaches \texttt{K}, the segment becomes logically removed and the following \texttt{remove()} call should remove it physically.
In the second case, all the cells are already in \texttt{INTERRUPTED} state, and the number of pointers that reference this segment reaches zero. In this case, the corresponding code must check whether the previously referenced segment became logically removed and invoke \texttt{remove()} if needed. 

Correspondingly, when \texttt{SegmentS} needs to be updated, they should increment the number of pointers that reference the new segment. However, if this new segment is already logically removed, the increment fails, and the update should be restarted. The corresponding logic is represented with the \texttt{tryIncPointers()} function (lines~\ref{line:r:tryInc0}--\ref{line:r:tryInc1}).
Similarly, when \texttt{SegmentS} stop referencing some segment, they decrement the number of pointers. The corresponding \texttt{decPointers()} function returns \texttt{true} if the segments becomes logically removed (lines~\ref{line:r:tryDec0}--\ref{line:r:tryDec1}).

The only exception from this is the attempt to remove the tail segment. We forbid removing the tail segment, as doing so would make it  more difficult to ensure that each segment has a unique id throughout the list's lifetime. Therefore, we ignore the attempts to remove the tail segment {---} see the first statement in \texttt{remove()} (line~\ref{line:r:checkTail}). Thus, if a segment was the tail of the list at the time of logical removal, the following \texttt{remove()} call does nothing, and the physical removal of this segment is postponed until it stops being the tail.

\paragraph{The \texttt{findAndMoveForwardSend(..)} operation.}
We split the operation into two parts. First, the \texttt{findSegment(..)} function finds the first non-removed segment with id equal to or greater than the requested one, creating new segments if needed (line~\ref{line:r:findSegm}). Once the segment is found, we try to make \texttt{SegmentS} point to it {---} this part can fail if the found segment becomes logically removed in the meantime, so the procedure restarts (line~\ref{line:r:move}).

Essentially, the \texttt{findSegment(..)} operation find the first non-removed segment with \texttt{id} equals or greater than the specified one following by \texttt{next} pointers, starting from the specified segment. In addition, once the tail of the list is updated, it checks whether the old tail should be removed (line~\ref{line:r:oldTailRemove}).

As for the \texttt{moveForwardSend(..)}, we need to increment and decrement the numbers of pointers. We first try to increment the number of pointers to the new segment (line~\ref{line:r:tryIncPointersMove}), returning \texttt{false} and causing \texttt{findAndMoveForwardSend(..)} to restart. If the increment of the number of pointers succeeds, the operation tries to update \texttt{SegmentS} to the new one (line~\ref{line:r:moveUpdate}). If the update succeeds, the number of pointers to the old segment (\texttt{cur} in the code) should be decremented, removing the segment physically if needed (line~\ref{line:r:removeOldTailMove}).
If the \texttt{SegmentS} update fails, the operation decrements the number of pointers to the new segment back (removing it if needed) and restarts.

\paragraph{The \texttt{Segment.onInterruptedCell()} operation.}
When the cell moves to \texttt{INTERRUPTED}, the \texttt{onInterruptedCell()} procedure is called. It increments the \texttt{interrupted} counter and checks if this led to the segment becoming logically removed, in which case it invokes \texttt{remove()} (lines~\ref{line:r:onc0}--\ref{line:r:onc1}).

\paragraph{The \texttt{Segment.remove()} operation.}
The final part is the \texttt{remove()} operation.
If the segment that is being removed is the tail, the removal is postponed and delegated to the \texttt{findSegment(..)} call that will update the tail and check whether the old one should be removed (line~\ref{line:r:checkTail}). 

Otherwise, the algorithm finds the first non-removed segment to the right (line~\ref{line:r:findR}) by following \texttt{next} pointers, and the first non-removed segment on the left (line~\ref{line:r:findL}) by following \texttt{prev} pointers. 
After that, we link the segment on the right with the segment on the left by updating its \texttt{prev} pointer (line~\ref{line:r:removeUpdPrev}). If a non-removed segment on the left has not been found, \texttt{prev} is updated to \texttt{null}. Otherwise, if such a segment has been found successfully, we link it with the segment on the right by updating the \texttt{next} pointer (line~\ref{line:r:removeUpdNext}).
Similar to the \texttt{findSegment(..)} operations, if all the segments on the right are logically removed, we manipulate the tail one.

As a result, we successfully linked the segments found on the left and on the right with each other. However, they could have been removed in the meantime. Therefore, we check that they are still non-removed and re-start the removal if not (lines~\ref{line:r:restart0}--\ref{line:r:restart1}). Otherwise, the removal procedure is completed. It is worth noting that it is possible that concurrent \texttt{remove()}-s keep the reference to our removed segment, and can accidentally re-link it with some other segment(s). However, due to checks that the segments found on the left and on the right are non-removed after the linking procedure (lines~\ref{line:r:restart0}--\ref{line:r:restart1}), we can guarantee that even if such an accident occurs, the \texttt{remove()} that led to this error will fix the problem. Thus, we know that the segment will be removed eventually.

\begin{figureAsListingWide}
\begin{minipage}[t]{0.52\textwidth}
\begin{lstlisting}[basicstyle=\scriptsize\selectfont\ttfamily]
fun findAndMoveForwardSend(start: Segment, 
                           id: Long): Segment {
  while (true): 
  ^\indentrule^  segm := findSegment(start, id) ^\label{line:r:findSegm}^
  ^\indentrule^  // Try to update `SegmentS`, and restart if  
  ^\indentrule^  // the found segment is removed and is not tail
  ^\indentrule^  if moveForwardSend(segm): return segm ^\label{line:r:move}^
}

fun findSegment(start: Segment, id: Long): Segment {
  cur := start
  while cur.id < id || ^^@cur.removed()@: ^\label{line:r:highlighted}^
  ^\indentrule^  if cur.next == null: 
  ^\indentrule^  ^\indentrule^  // Create a new segment if needed
  ^\indentrule^  ^\indentrule^  newSegm := Segment(id = cur.id + 1, prev = cur)
  ^\indentrule^  ^\indentrule^  if CAS(&cur.next, null, newSegm):
  ^\indentrule^  ^\indentrule^  ^\indentrule^  // Is the previous tail removed?
  ^\indentrule^  ^\indentrule^  ^\indentrule^  ^^@if cur.removed(): cur.remove()@ ^\label{line:r:oldTailRemove}^
  ^\indentrule^  cur = cur.next
  return cur
}
fun moveForwardSend(to: Segment): Bool {
  while (true):
  ^\indentrule^  cur := SegmentS // read the current segment
  ^\indentrule^  // Do we still need to update `SegmentS`?
  ^\indentrule^  if cur.id >= to.id: return true
  ^\indentrule^  // Try to inc pointers to `to`
  ^\indentrule^  if ^^@!to.tryIncPointers(): return false@ ^\label{line:r:tryIncPointersMove}^
  ^\indentrule^  // Try to update `SegmentS`
  ^\indentrule^  if CAS(&SegmentS, cur, to): ^\label{line:r:moveUpdate}^
  ^\indentrule^  ^\indentrule^  // Dec pointers to `cur` and remove if needed
  ^\indentrule^  ^\indentrule^  ^^@if cur.decPointers(): cur.remove()@ ^\label{line:r:removeOldTailMove}^
  ^\indentrule^  ^\indentrule^  return true
  ^\indentrule^  // The `SegmentS` update failed, dec pointers 
  ^\indentrule^  // for `to` back and remove it if needed
  ^\indentrule^  if ^^@to.decPointers(): to.remove()@
}

class Segment {
 // Initialized with (3, 0) for the first 
 // segment; stored in a single 32-bit Int
 var (pointers, interrupted) = (0, 0) ^\label{line:r:pc}^
 
 fun ^^^onInterruptedCell()^ { // invoked on interruption ^\label{line:r:onc0}^
   atomic { interrupted++ } 
   ^^@if removed(): remove()@
 } ^\label{line:r:onc1}^
\end{lstlisting}
\end{minipage}
\hfill
\begin{minipage}[t]{0.45\textwidth}
\begin{lstlisting}[firstnumber=48,basicstyle=\scriptsize\selectfont\ttfamily]
 fun removed(): Bool = atomic { ^\label{line:r:removed0}^
   return interrupted == K && pointers = 0
 } ^\label{line:r:removed1}^
 
 // Increments the number of pointers; fails
 // when the segment is logically removed
 fun tryIncPointers(): Bool = atomic { ^\label{line:r:tryInc0}^
   if removed():  return false
   pointers++; return true
 } ^\label{line:r:tryInc1}^
 // Decrements the number of pointers and
 // returns `true` if the segment becomes 
 // logically removed, `false` otherwise
 fun decPointers(): Bool = atomic { ^\label{line:r:tryDec0}^
   pointers--; return removed()
 } ^\label{line:r:tryDec1}^
  
 // Physically removes the current segment
 fun remove() = while(true) {
   // The tail segment cannot be removed
   if next == null: return ^\label{line:r:checkTail}^
   // Find the closest non-removed segments
   // on the left and on the right
   prev := aliveSegmRight() ^\label{line:r:findR}^
   next := aliveSegmLeft() ^\label{line:r:findL}^
   // Link next and prev
   next.prev = prev ^\label{line:r:removeUpdPrev}^
   if prev != null: prev.next = next ^\label{line:r:removeUpdNext}^
   // Are `prev` and `next` still non-removed?
   if next.removed() && next.next != null:  ^\label{line:r:restart0}^
   ^\indentrule^  continue
   if prev != null && prev.removed(): continue ^\label{line:r:restart1}^
   return // this segment is removed
 }
 fun aliveSegmLeft(): Segment? {
   cur := prev
   while cur != null && cur.removed():
   ^\indentrule^ cur = cur.prev
   return cur // `null` if all are removed
 }
 fun aliveSegmRight(): Segment {
   cur := next
   while cur.removed() && cur.next != null:
   ^\indentrule^ cur = cur.next
   return cur // tail if all are removed
 }
} // end of class Segment
\end{lstlisting}
\end{minipage}
\vspace{-1em}
\caption{Pseudocode for the segment management algorithm. First, the \texttt{findAndMoveForwardSend(..)} implementation is presented; the functions for \texttt{receive()} and \texttt{expandBuffer()} are similar and, thus, omitted. After that, the segment structure along with the \texttt{remove()} logic are showed. The key parts for the segment removal support are highlighted with yellow.}
\vspace{-3em}
\label{listing:remove_segment}
\end{figureAsListingWide}

\clearpage
\section{The Simplified Buffered Channel Algorithm for the Correctness Proof in Subection~\ref{subsec:correctness}} \label{appendix:proofs_code}
To show the buffer maintenance correctness in Subection~\ref{subsec:correctness}, we analyze a simplified version of the algorithm presented in the paper. In this appendix section, we show the pseudocode of this simplified algorithm. 

\begin{figureAsListingWide}
\begin{minipage}[t]{0.5\textwidth}
\begin{lstlisting}[basicstyle=\scriptsize\selectfont\ttfamily]
var S, R, B: Long
var A: InfiniteArray

fun send(element: E) = while(true) { 
 s := FAA(&S, +1) // reserve a cell 
 A[s].elem = element 
 if updCellSend(s): return
 // Restart the operation
}

fun receive(): E = while(true) {
 r := FAA(&R, +1) // reserve a cell 
 if updCellRcv(r): // success, get the element 
 ^\indentrule^  e := A[r].elem; A[r].elem = null; return e
 // Restart the operation
}

// Called when ^\color{Mahogany}`receive()`^ synchronization completes
fun expandBuffer() = while(true) {
  b := FAA(&B, +1)
  if updCellEB(b): return  
}

// Returns `false` if this ^\color{Mahogany}`send(e)`^ should restart
fun updCellSend(s: Int): Bool = while(true) { 
 state := A[s].state // read the current state 
 b := B // read the logical end of the buffer 
 when {
 ^\indentrule^// The cell is a part of the buffer ^\color{Mahogany}=>^ finish
 ^\indentrule^state == IN_BUFFER:
 ^\indentrule^  if CAS(&A[s].state, state, BUFFERED):
 ^\indentrule^  ^\indentrule^  return true  
 ^\indentrule^// Empty and won't be in the buffer ^\color{Mahogany}=>^ suspend
 ^\indentrule^state == null && s >= b: 
 ^\indentrule^  cor := curCor() 
 ^\indentrule^  if CAS(&A[s].state, null, cor): 
 ^\indentrule^  ^\indentrule^  cor.park( // wait for a rendezvous
 ^\indentrule^  ^\indentrule^   onInterrupt = {A[s] = (INTERRUPTED^$_\mathtt{SEND}$^, null)} 
 ^\indentrule^  ^\indentrule^  ) 
 ^\indentrule^  ^\indentrule^  return true 
 ^\indentrule^// Waiting receiver ^\color{Mahogany}=>^ resume it and finish
 ^\indentrule^state is Coroutine^$_\mathtt{RCV}$^: 
 ^\indentrule^  state.tryUnpark(); return true
 ^\indentrule^// Empty but will be in the buffer ^\color{Mahogany}=>^ spin-wait
 ^\indentrule^state == null && s < b: continue
 }
}
\end{lstlisting}
\end{minipage}
\hfill
\begin{minipage}[t]{0.46\textwidth}
\begin{lstlisting}[firstnumber=48,basicstyle=\scriptsize\selectfont\ttfamily]
// Returns `false` if this ^\color{Mahogany}`receive()`^ should restart
fun updCellRcv(r: Int): Bool = while(true) {
 state := A[r].state // read the current state
 s := S // read the sender's counter
 when {
 ^\indentrule^// Buffer cell and no sender is coming ^\color{Mahogany}=>^ suspend
 ^\indentrule^state == IN_BUFFER && r >= s: 
 ^\indentrule^  cor := curCor()
 ^\indentrule^  if CAS(&A[r].state, null, cor):
 ^\indentrule^  ^\indentrule^  expandBuffer()
 ^\indentrule^  ^\indentrule^  cor.park()
 ^\indentrule^  ^\indentrule^  return true
 ^\indentrule^// Buffer cell but a sender is coming ^\color{Mahogany}=>^ spin-wait
 ^\indentrule^state == IN_BUFFER && r < s: continue
 ^\indentrule^// Buffered element ^\color{Mahogany}=>^ finish
 ^\indentrule^^^state == BUFFERED: 
 ^\indentrule^  expandBuffer(); return true
 ^\indentrule^// Interrupted sender ^\color{Mahogany}=>^ fail
 ^\indentrule^state == INTERRUPTED^$_\mathtt{SEND}$^: return false
 ^\indentrule^// Suspended sender ^\color{Mahogany}=>^ spin-wait
 ^\indentrule^state is Coroutine^$_\mathtt{SEND}$^: continue
 ^\indentrule^// Empty cell ^\color{Mahogany}=>^ spin-wait
 ^\indentrule^state == null: continue
 }
}

// Returns `true` if ^\color{Mahogany}expandBuffer()^ should 
// finish, and `false` when it should restart
fun updCellEB(b: Int): Bool = while(true) {
 state := A[b]
 when {
 ^\indentrule^// Cell is empty ^\color{Mahogany}=>^ move to the buffer ^\color{Mahogany}\&^ finish
 ^\indentrule^state == null:
 ^\indentrule^  if CAS(&A[b].state, null, IN_BUFFER): 
 ^\indentrule^  ^\indentrule^  return true
 ^\indentrule^// Waiting sender  ^\color{Mahogany}=>^ try to resume it 
 ^\indentrule^state is Coroutine^$_\mathtt{SEND}$^:
 ^\indentrule^  if state.tryUnpark():
 ^\indentrule^  ^\indentrule^  A[s].state = BUFFERED; return true
 ^\indentrule^  else:
 ^\indentrule^  ^\indentrule^  A[s].state = INTERRUPTED^$_\mathtt{\textbf{SEND}}$^; return false
 ^\indentrule^// The sender was interrupted ^\color{Mahogany}=>^ fail
 ^\indentrule^state == INTERRUPTED^$_\mathtt{\textbf{SEND}}$^: return false
}
\end{lstlisting}
\end{minipage}
\caption{The simplified version of the buffered channel algorithm according to the cell life-cycle diagram in Figure~\ref{figure:proof_scheme}.
}
\label{listing:send_highlevel2}
\end{figureAsListingWide}

\clearpage
\section{Incorrect Execution when Using the MPDQ Synchronous Queue as a Rendezvous Channel}\label{appendix:mpdq}
In the MPDQ~\cite{SPDQ} synchronous queue algorithm, each operation reserves a cell in a circular array, which is then employed for rendezvous. 
Similar to our solution, the cells are reserved by incrementing a counter separate for senders and receivers. 
After the operation increments its counter, it either adds itself to the circular array and suspends or makes a rendezvous with the opposite request already stored in the cell. 
The key difference with our solution is that the MPDQ algorithm does not compare the operation counters \texttt{S} and \texttt{R} to decide whether to suspend or not, always suspending if the cell state is \texttt{EMPTY}.

Now we show that the MPDQ synchronization scheme introduces incorrect behavior. 
We consider an execution where three threads $s_1, s_2$ and $r_1$ execute \texttt{send(e)}, \texttt{send(e)}, and \texttt{receive()} operations, respectively. 
Assume that $s_1$ reserves a cell in the waiting queue by incrementing the corresponding counter but not adding itself to the cell yet. 
Then, $s_2$ also reserves the (following) cell in the queue, and \emph{does} store itself to the cell, completing the registration part of its \texttt{send(e)} operation and suspending. 
Finally, $r_1$'s \texttt{receive()} operation starts, reserves a ``dequeue'' cell (the one corresponding to $s_1$) to rendezvous with, 
and then suspends, waiting for the $s_1$'s operation to complete. However, this is incorrect relative to the channel semantics since there is another \texttt{send(e)} operation ($s_2$'s), which has already completed its registration phase and is suspended, waiting for a rendezvous. 

In particular, with coroutines, $s_1$ may be scheduled on one native thread while both $s_2$ and $r_1$ are scheduled on another. Thus, when the \texttt{send(e)} operation in $s_2$ suspends, the execution goes to $r_1$, which knows that there should an element in the channel {---} suspending in this case becomes extremely counter-intuitive. 

Interestingly, this type of pathology is exactly why ``broken'' cells are introduced in our algorithm. More precisely, in our algorithm, a \texttt{receive()} operation, which tries to make a rendezvous, marks the reserved but empty cell as broken and proceeds to rendezvous on the first populated cell.

\end{document}